\newcommand{\compilehidecomments}{false}
\newcommand{\Pa}{{\it Pa}}
\newcommand{\pa}{{\it pa}}
\newcommand{\bp}{\boldsymbol{p}}
\newcommand{\bq}{\boldsymbol{q}}
\newcommand{\br}{\boldsymbol{r}}
\begin{document}

\ifthenelse{ \equal{\compilehidecomments}{true} }{%
	\newcommand{\wei}[1]{}
	\newcommand{\shi}[1]{}
}{
	\newcommand{\wei}[1]{{\color{blue!50!black}  [\text{Wei:} #1]}}
	\newcommand{\shi}[1]{{\color{red!60!black} [\text{Shi:} #1]}}
}

%% macro for full vs. short version
\newcommand{\compilefullversion}{true}%SHOW short version
\ifthenelse{\equal{\compilefullversion}{false}}{%
	\newcommand{\OnlyInFull}[1]{}
	\newcommand{\OnlyInShort}[1]{#1}
}{%
	\newcommand{\OnlyInFull}[1]{#1}%
	\newcommand{\OnlyInShort}[1]{}%
}%

\title{Causal Inference for Influence Propagation ---  
		Identifiability of the Independent Cascade Model}
\titlerunning{Causal Inference for Influence Propagation}
% If the paper title is too long for the running head, you can set
% an abbreviated paper title here
%
\author{Shi Feng\inst{1}\orcidID{0000-0001-5517-0419} 
	\and
Wei Chen\inst{2}%\orcidID{1111-2222-3333-4444}
% \shi{would you like to add an ORCID code?}
}
\authorrunning{Shi Feng and Wei Chen}
% First names are abbreviated in the running head.
% If there are more than two authors, 'et al.' is used.
%
\institute{IIIS, Tsinghua University, Beijing, China \\
\email{fengs19@mails.tsinghua.edu.cn}\\
%\url{http://www.springer.com/gp/computer-science/lncs} 
\and
Microsoft Research, Beijing, China\\
\email{weic@microsoft.com}
}
\maketitle              % typeset the header of the contribution
\begin{abstract}
Independent cascade (IC) model is a widely used influence propagation model for social networks. 
In this paper, we incorporate the concept and techniques from causal inference to study the
	identifiability of parameters from observational data 
	in extended IC model with unobserved confounding factors, which models more realistic
	propagation scenarios but is rarely studied in influence propagation modeling before.
We provide the conditions for the identifiability or unidentifiability of parameters for several 
	special structures including the Markovian IC model, semi-Markovian IC model, and IC model with a global 
	unobserved variable.
Parameter identifiability is important for other tasks such as influence maximization under the diffusion networks
	with unobserved confounding factors.

\keywords{influence propagation \and independent cascade model  \and identifiability \and causal inference.}
\end{abstract}
\section{Introduction}

Extensive research has been conducted studying the information and influence propagation behavior in social networks, with numerous propagation models and optimization algorithms proposed (cf. \cite{kempe03,chen2013information}).
Social influence among individuals in a social network is intrinsically a causal behavior --- one's action or behavior causes the change of the behavior of his or her friends in
	the network.
Therefore, it is helpful to view influence propagation as a causal phenomenon and apply the tools in causal inference to this domain.

In causal inference, one key consideration is the confounding factors caused by unobserved variables that affect the observed behaviors of individuals in the network.
For example, we may observe that user A adopts a new product and a while later her friend B adopts the same new product.
This situation could be because A influences B and causes B's adoption, but it could also be caused by an unobserved factor (e.g. an unknown information source) that affects both
	A and B.
Confounding factors are important in understanding the propagation behavior in networks, but so far the vast majority of influence propagation research does not consider
	confounders in network propagation modeling.
In this paper, we intend to fill this gap by explicitly including unobserved confounders into the model, and we borrow the research methodology from causal inference to
	carry out our research.
	
Causal inference research has developed many tools and methodologies to deal with such unobserved confounders, and one important problem in causal inference is to study
	the identifiability of the causal model, that is, if we can identify the certain effect of an intervention, or identify causal model parameters, from the observational data.
In this paper, we introduce the concept of identifiability in causal inference research to influence propagation research and study whether the propagation models can be
	identified from observational data when there are unobserved factors in the causal propagation model.
We propose the extend the classical independent cascade (IC) model to include unobserved causal factors, and consider the parameter identifiability problem for several common
	causal graph structures.
Our main results are as follows.
First, for the Markovian IC model, in which each unobserved variable may affect only one observed node in the network,  we show that it is fully identifiable.
Second, for the semi-Markovian IC model, in which each unobserved variable may affect exactly two observed nodes in the network, we show that as long as a local graph structure
	exists in the network, then the model is not parameter identifiable.
For the special case of a chain graph where all observed nodes form a chain and every unobserved variable affect two neighbors on the chain, the above result implies that we need
	to know at least $n/2$ parameters to make the rest parameters identifiable, where $n$ is the number of observed nodes in the chain.
We then show a positive result that when we know $n$ parameters on the chain, the rest parameters are identifiable.
Third, for the global hidden factor model where we have an unobserved variable that affects all observed nodes in the graph, we provide reasonable sufficient conditions so that the parameters are identifiable.
	
Overall, we view that our work starts a new direction to integrate rich research results from network propagation modeling and causal inference so that we could view
	influence propagation from the lens of causal inference, and obtain more realistic modeling and algorithmic results in this area.
For example, from the causal inference lens, the classical influence maximization problem~\cite{kempe03} of finding a set of $k$ nodes to maximize the total influence spread is
	really a causal intervention problem of forcing an intervention on $k$ nodes for their adoptions, and trying to maximize the causal effect of this intervention.
Our study could give a new way of studying influence maximization that works under more realistic network scenarios encompassing unobserved confounders. \OnlyInShort{Due to the limitation of space, the complete proofs of some of the theorems are placed in the full version \cite{DBLP:journals/corr/abs-2107-04224} on arXiv, and only outlines are given in this version.}

%
%
%
%
%
%In social networks, it is an important task to estimate the impact of a person on others in the network. This influence can be measured in terms of do effects, for example, in the IC model, $P(Y=1|do(X=1))$ indicates how likely $Y$ is to be activated given that $X$ is selected as the seed node. This reflects, to some extent, how much influence $X$ has on $Y$. If we can get all the parameters, then naturally we can compute any do effects from these parameters. so in this paper, we will discuss the identifiability of parameters in the IC model under different settings. 

\section{Related Work}

%\shi{Should we change the structure and put Related Work section to another position? (Review 1)}

\noindent
{\bf Influence Propagation Modeling.\ }
As described in \cite{chen2013information}, the main two models used to describe influence propagation are the independent cascade model and the linear threshold model. Past researches on influence propagation mostly focused on influence maximization problems, such as \cite{kempe03,tang2015influence}. In these articles, they select seed nodes online, observe the propagation in the network, and optimize the number of activated nodes after propagation by selecting optimal seed nodes. Also, some works are studying the seed-node set minimization problem, such as \cite{goyal2013minimizing}. However, in our work, we mainly consider restoring the parameters in the independent cascade model by observing the network propagation. After obtaining the parameters in the network, we can then base on this to accomplish downstream tasks including influence maximization and seed-node set minimization.

\vspace{2mm}
\noindent
{\bf Causal Inference and Identifiability.\ }
For general semi-Markovian Bayesian causal graphs, \cite{huang2006identifiability} and \cite{shpitser2006identification} have given two different algorithms to determine whether a do effect is identifiable, and these two algorithms have both soundness and correctness. \cite{huang2012pearl} also proves that the ID algorithm and the repeating use of the do calculus are equivalent, so for semi-Markovian Bayesian causal graphs, the do calculus can be used to compute all identifiable do effects.

In addition, for a special type of causal model, the linear causal model, articles \cite{drton2011global} and \cite{foygel2012half} have given some necessary conditions and sufficient conditions on whether the parameters in the graph are identifiable with respect to the structure of the causal graph. However, the necessary and sufficient condition for parameter identifiability problem is not addressed and it remains an open question.
In this paper, we study another special causal model derived from the IC model.
Since the IC model can be viewed as a Bayesian causal model when the graph structure is a directed acyclic graph and it has some special properties, we try to give some necessary conditions and sufficient conditions for the parameters to be identifiable under some special graph structures.

\section{Model and Problem Definitions}
\label{sec:model}

Following the convention in causal inference literature (e.g. \cite{Pearl09}), 
	we use capital letters ($U,V,X,\ldots$) to represent variables or a set of variables, and their corresponding lower-case letters to represent their values.
For a directed graph, we use $U$'s and $V$'s to represent nodes since each node will also be treated as a random variable in causal inference.
For a node $V_i$, we use $N^+(V_i)$ and $N^-(V_i)$ to represent the set of its out-neighbors and in-neighbors, respectively.
When the graph is directed acyclic (DAG), we refer to a node's in-neighbors as its parents and denote the set as $\Pa(V_i) = N^-(V_i)$.
When we refer to the actual values of the parent nodes of $V_i$, we use $\pa(V_i)$.
For a positive integer $k$, we use $[k]$ to denote $\{1, 2, \ldots, k\}$.
We use boldface letters to represent vectors, such as $\br = (r_1, r_2, \ldots, r_n) = (r_i)_{i\in [n]}$.

%\subsection{IC Model and Identifiability}

The classical {\em independent cascade model} \cite{kempe03} of influence diffusion in a social network is modeled as follows.
The social network is modeled as a directed graph $G=(V,E)$, where $V=\{V_1,V_2,\cdots,V_n\}$ is the set of nodes representing individuals in the social network, and
	$E \subseteq V \times V$ is the set of directed edges representing the influence relationship between the individuals.
Each edge $(V_i, V_j) \in E$ is associated with an influence probability $p(i,j)\in (0,1]$ (we assume that $p(i,j)=0$ if $(V_i, V_j) \notin E$).
Each node is either in state $0$ or state $1$, representing the idle state and the active state, respectively. 
At time step $0$, a {\em seed set} $S_0 \subseteq V$ of nodes is selected and activated (i.e. their states are set to $1$), and all other nodes are in state $0$.
The propagation proceeds in discrete time steps $t=1, 2, \ldots$.
Let $S_t$ denote the set of nodes that are active by time $t$, and let $S_{-1}=\emptyset$.
At any time $t = 1, 2, \ldots$, the newly activated node $V_i \in S_{t-1} \setminus S_{t-2}$ tries to activate each of its inactive outgoing neighbors $V_j \in N^+(V_i)$, and
	the activation is successful with probability $p(i,j)$.
If successful, $V_j$ is activated at time $t$ and thus $V_j \in S_t$.
The activation trial of $V_i$ on its out-neighbor $V_j$ is independent of all other activation trials.
Once activated, nodes stay as active, that is, $S_{t-1} \subseteq S_t$.
The propagation process ends at a step when there are no new nodes activated.
It easy to see that the propagation ends in at most $n-1$ steps, so we use $S_{n-1}$ to denote the final set of active nodes after the propagation.

Influence propagation is naturally a result of causal effect --- one node's activation causes the activation of its outgoing neighbors.
If the graph is directed and acyclic, then the IC model on this graph can be equated to a Bayesian causal model.
In fact, we can consider each node in the IC model as a variable, and for a node $V_i$, it takes the value determined by $P(V_i=1|\pa(V_i))=1-\prod_{j:V_j\in \Pa(V_i),v_j=1\text{ in }\pa(V_i)}(1-p_{j,i})$. Obviously, this is equivalent to our definition in the IC model.
IC model is introduced in~\cite{kempe03} to model influence propagation in social networks, but in general, it can model the causal effects among binary random variables.
In this paper, we mainly consider the directed acyclic graph (DAG) setting, which is in line with the causal graph setting in the causal inference literature~\cite{Pearl09}.
\OnlyInFull{We will discuss the extension to general cyclic graphs or networks in the appendix.}\OnlyInShort{We discuss the extension to general cyclic graphs or networks in the full version \cite{DBLP:journals/corr/abs-2107-04224}.}

All variables $V_1, V_2, \ldots, V_n$ are observable, and we call them {\em observed variables}.
They correspond to observed behaviors of individuals in the social network.
There are also potentially many {\em unobserved (or hidden) variables} that affecting individuals' behaviors. 
We use $U=\{U_1, U_2, \ldots \}$ to represent the set of unobserved variables.
In the IC model, we assume each $U_i$ is a binary random variable with probability $r_i$ to be $1$ and probability $1-r_i$ to be $0$, and all unobserved variables are
mutually independent.
We allow unobserved variables $U_i$'s to have directed edges pointing to the observed variables $V_j$'s, but we do not consider directed edges among the
unobserved variables in this paper.
If $U_i$ has a directed edge pointing to $V_j$, we usually use $q_{i,j}$ to represent the parameter on this edge.
It has the same semantics as the $p_{i,j}$'s in the classical IC model: if $U_i=1$, then with probability $q_{i,j}$ $U_i$ successfully influence $V_j$ by setting its state to $1$, and with probability $1-q_{i,j}$ 
$V_j$'s state is not affected by $U_i$, and this influence or activation effect is independent from all other activation attempts on other edges.
Thus, overall, in a network with unobserved or hidden variables, we use $G=(U,V,E)$ to represent the corresponding causal graph, where $U$ is the set of unobserved variables,
$V$ is the set of observed variables, and $E \subseteq (V\times V) \cup (U \times V)$ is the set of directed edges.
We assume that $G$ is a DAG, and the state of every unobserved variable $U_i$ is sampled from $\{0,1\}$ with parameter $r_i$, while the state of
every observed variable $V_j$ is determined by the states of its parents and the parameters on the incoming edges of $V_j$ following the IC model semantics.
In the DAG $G$, we refer to an observable node $V_i$ as a {\em root} if it has no observable parents in the graph.
Every root $V_i$ has at least one unobserved parent.
We use vectors $\bp, \bq, \br$ to represent parameter vectors associated with edges among observed variables, 
	edges from unobserved to observed variables, and unobserved nodes, respectively.
We refer to the model $M = (G=(U,V,E), \bp,\bq,\br)$ as the {\em causal IC model}.
When the distinction is needed, we use capital letters $P,Q,R$ to represent the parameter names, and lower boldface letters $\bp, \bq, \br$ to represent
	the parameter values.
% \wei{I am not very sure about this notation convention. I think we need a notation for the set of parameters in parameter identifiability. See definition below.}

In this paper, we focus on the {\em parameter identifiability} problem following the causal inference literature.
In the context of the IC model, the states of nodes $V=\{V_1, V_2, \ldots, V_n\}$ are observable while the states of $U = \{U_1, U_2, \ldots\}$ are unobservable.
We define parameter identifiability as follows.

\begin{definition}[Parameter Identifiability]
\label{def:identifiability}
Given a graph $G=(U,V,E)$, 
	we say that a set of IC model parameters $\Theta \subseteq P\cup Q \cup R $ on $G$ is {\em identifiable} if 
	after fixing the values of parameters outside $\Theta$ and fixing the observed probability distributions $P(V'=v')$ for all $V'\subseteq V$ and all $v' \in \{0,1\}^{|V'|}$,
	the values of parameters in $\Theta$ are uniquely determined.
We say that {\em the graph $G$ is parameter identifiable} if $\Theta = P\cup Q \cup R $.
Accordingly, the algorithmic problem of parameter identifiability is to derive the unique values of parameters in $\Theta$ given graph $G=(U,V,E)$, 
	the values of parameters outside $\Theta$, and
	the observed probability distributions $P(V'=v')$ for all $V'\subseteq V$ and all $v' \in \{0,1\}^{|V'|}$.
Finally, if the algorithm only uses a polynomial number of observed probability values $P(V'=v')$'s and runs in
	polynomial time, where both polynomials are with respect to the graph size, we say that the parameters in $\Theta$ are {\em efficiently identifiable}.
\end{definition}

Note that when there are no unobserved variables (except the unique unobserved variables for each root of the graph), 
	the problem is mainly to derive the parameters $p_{i,j}$'s from all observed $P(V'=v')$'s. 
In this case, the parameter identifiability problem bears similarity with the well-studied network inference problem~\cite{Gomez-RodriguezLK10,myers2010convexity,gomez2011uncovering,DuSSY12,netrapalli2012finding,abrahaotrace,daneshmand2014estimating,DuSGZ13,DuLBS14,narasimhanlearnability,pougetinferring,he2016learning}. The network inference problem focuses on using observed cascade data to derive the network structure and propagation parameters, and it emphasizes
	on the sample complexity of inferring parameters.
Hence, when there are no unobserved variables in the model, we could use the network inference methods to help to solve the parameter identifiability problem.
However, in real social influence and network propagation, there are other hidden 
	factors that affect the propagation and the resulting distribution.
Such hidden factors are not addressed in the network inference literature.
In contrast, our study in this paper is focusing on addressing these hidden factors in network inference, and thus we borrow the ideas from causal inference to study the identifiability problem under the IC model.
	
In this paper, we study three types of unobserved variables that could commonly occur in network influence propagation. 
They correspond to three types of IC models with unobserved variables, as summarized below.

\vspace{2mm}
\noindent
{\bf Markovian IC Model.\ }
In the {\em Markovian IC model}, each observed variable $V_i$ is associated with a unique unobserved variable $U_i$, and there is a directed edge from $U_i$ to $V_i$.
This models the scenario where each individual in the social network has some latent and unknown factor that affects its observed behavior.
We use $q_i$ to denote the parameter on the edge $(U_i, V_i)$. 
Note that the effect of $U_i$ on the activation of $V_i$ is determined by probability $r_i \cdot q_i$, and thus we treat $r_i=1$ for all $i\in [n]$, and focus on
	identifying parameters $q_i$'s.
Thus the graph $G=(U,V,E)$ has parameters $\bq = (q_i)_{i\in [n]}$, and $\bp = (p_{i,j})_{(V_i,V_j)\in E}$.
Figure \ref{fig:markovian_model} shows an example of a Markovian IC model.
If some $q_i=0$, it means that the observed variable $V_i$ has no latent variable influencing it, and it only receives influence from other observed variables.

\begin{figure}[htbp]
\centering
\begin{minipage}[t]{0.4\textwidth}
\centering
\includegraphics[width=\linewidth]{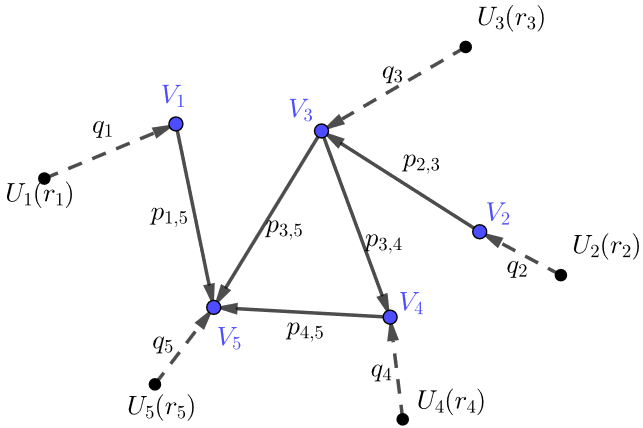}
\caption{A Markovian IC model with five nodes.}
\label{fig:markovian_model}
\end{minipage}
\begin{minipage}[t]{0.4\textwidth}
\centering
\includegraphics[width=\linewidth]{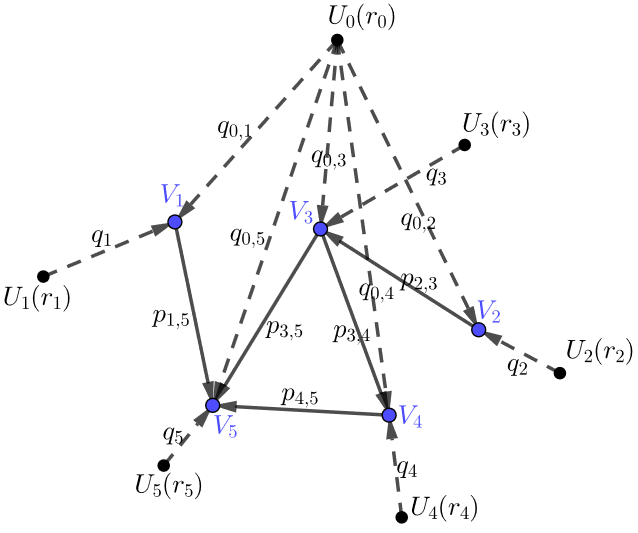}
\caption{A Markovian IC model with five nodes and a global unobserved variable.}
\label{fig:mixed_model}
\end{minipage}
\end{figure}

% \begin{figure}[tb]
% 	\centering
% 	\includegraphics[width=0.45\linewidth]{figures/markovian_model.png}
% 	\caption{A Markovian IC model with five nodes.}
% 	\label{fig:markovian_model}
% \end{figure}

\vspace{2mm}
\noindent
{\bf Semi-Markovian IC Model.\ }
The second type of unobserved variables is the hidden variables connected to exactly two observed variables in the graph.
In particular, for every pair of nodes  $V_i,V_j \in V$, we allow one unobserved variable $U_{i,j}$ that has two edges, one pointing to $V_i$ and the other pointing to $V_j$.
This models the scenario that two individuals in the social network has a common unobserved confounder that may affect the behavior of two individuals.
We call this type of model {\em semi-Markovian IC model}, following the common terminology of the semi-Markovian model in the literature~\cite{Pearl09}.
In this model, each $U_{i,j}$ has a parameter $r_{i,j}$, and edges $(U_{i,j},V_i)$ and $(U_{i,j},V_j)$ have parameters $q_{i,j,1}$ and $q_{i,j,2}$ respectively.
Therefore, the graph has parameters $\br=(r_{i,j})_{(V_i,V_j)\in E}$, $\bq = (q_{i,j,1}, q_{i,j,2})_{(V_i,V_j)\in E}$, and $\bp = (p_{i,j})_{(V_i,V_j)\in E}$.

%In particular, for every edge $(V_i,V_j) \in E$, we allow one unobserved variable $U_{i,j}$ that has two edges, one pointing to $V_i$ and the other pointing to $V_j$.
%This models the scenario that two neighbors in the network has a common unobserved confounder, which is natural since two neighbors in the network have social interactions, and they
%	are likely to have some unknown factor to affect the behavior of both individuals.
%Note that in the causal inference literature, a semi-Markovian model refers to a causal model where each unobserved variable points to exactly two observed variables, but the
%	two observed variables may not be connected. 
%Here we only consider the unobserved variables pointing to two observed variables that are already connected.
%This is both reasonable in the social network context, and it simplifies the technical treatment later. 
%In this case, the triangle structure formed by each unobserved variable $U_{i,j}$ and their corresponding observed variables $V_i$ and $V_j$ is often referred to as a bow structure.
%Therefore, we call this type of model {\em bow-type semi-Markovian IC model}.

Within this model, we will pay special attention to a special type of graphs where the observed variables form a chain, i.e. $V_1\rightarrow V_2\rightarrow\cdots\rightarrow V_n$, and
	the unobserved variables always point to the two neighbors on the chain.
In this case, we use $U_i$ to denote the unobserved variable associated with edge $(V_i, V_{i+1})$, and the parameters on the edges $(U_i, V_i)$ and $(U_i, V_{i+1})$ are denoted
	as $q_{i,1}$ and $q_{i,2}$, respectively.
Figure~\ref{fig:semi_markovian_chain} represents this chain model.

\begin{figure}[htbp]
    \centering
    \includegraphics[width=0.7\linewidth]{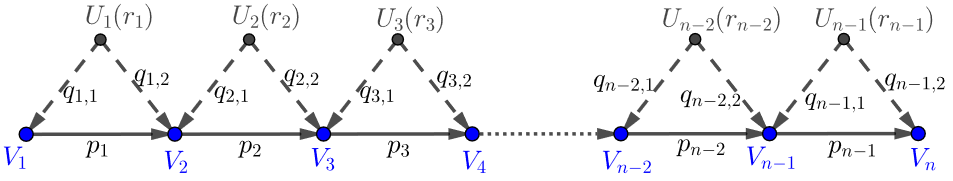}
    \caption{The semi-Markovian IC chain model.}
    \label{fig:semi_markovian_chain}
\end{figure}

\vspace{2mm}
\noindent
{\bf IC Model with A Global Unobserved Variable.\ }
The third type of hidden variables is a global unobserved variable $U_0$ that points to all observed variables in the network. 
This naturally models the global causal effect where some common factor affects all or most individuals in the network.
For every edge $(U_0,V_i)$, we use $q_{0,i}$ to represent its parameter.

Moreover, we can combine this model with the Markovian IC model, where we allow both unobserved variable $U_i$ for each individual and a global unobserved varoable $U_0$. 
Figure~\ref{fig:mixed_model} represents this model. 

% \begin{figure}[tb]
%     \centering
%     \includegraphics[width=0.45\linewidth]{figures/mixed_model.png}
%     \caption{A Markovian IC model with five nodes and a global unobserved variable.}
%     \label{fig:mixed_model}
% \end{figure}

\section{Parameter Identifiability of the Markovian IC Model}

For the Markovian IC model in which every observed variable has its own unobserved variable, we can fully identify the model parameters in most cases, as given
	by the following theorem.

\begin{theorem}[Identifiability of the Markovian IC Model]
For an arbitrary Markovian IC model $G=(U,V,E)$ with parameters $\bq = (q_i)_{i\in [n]}$ and $\bp = (p_{i,j})_{(V_i,V_j)\in E}$, 
	all the $q_i$ parameters are efficiently identifiable, and for every $i\in [n]$, if $q_i \ne 1$, then all $p_{j,i}$ parameters for $(V_j,V_i) \in E$ are 
	efficiently identifiable.
\end{theorem}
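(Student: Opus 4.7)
The plan is to exploit the product form
\[
P(V_i=0\mid \pa(V_i)) \;=\; (1-q_i)\!\!\prod_{V_j\in\Pa(V_i),\,v_j=1}\!\!(1-p_{j,i}),
\]
which follows directly from the IC semantics, and to recover parameters node by node in a topological order of $G$. Because every unobserved $U_i$ is private to $V_i$ and every parent of $V_i$ is observed, the joint distribution over $V$ factorises as a Bayesian network; in particular any conditional probability $P(V_i=0\mid\Pa(V_i)=\pa)$ with a positive-probability conditioning event is a ratio of two observed probabilities and hence directly computable from the data guaranteed by Definition~\ref{def:identifiability}.

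The first step is to pin down the $q_i$'s. For a root $V_i$ (no observed parent), $P(V_i=1)=q_i$ identifies $q_i$ on the nose. For a non-root $V_i$, substitute $\pa(V_i)=\vec 0$ into the product formula above to obtain $P(V_i=0\mid\Pa(V_i)=\vec 0)=1-q_i$, which is again a ratio of observed probabilities. Proceeding in the topological order yields every $q_i$.

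The second step is to recover the $p_{j,i}$'s when $q_i\neq 1$. Substituting $\pa(V_i)=\vec e_j$ (the assignment in which only the parent $V_j$ of $V_i$ is active) into the product formula gives $P(V_i=0\mid \Pa(V_i)=\vec e_j)=(1-q_i)(1-p_{j,i})$, and since $q_i$ is already known and strictly less than $1$, dividing out $1-q_i$ yields $p_{j,i}$. The whole procedure reads $O(n+|E|)$ observed joint-probability values and performs $O(n+|E|)$ arithmetic operations, so the identifiability is efficient in the sense of Definition~\ref{def:identifiability}.

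The main obstacle I foresee is verifying that the conditioning events $\{\Pa(V_i)=\vec 0\}$ and $\{\Pa(V_i)=\vec e_j\}$ actually have positive probability, so that the conditional expressions above are well defined. Since each $U_i$ is an independent coin, the event $\{V=\vec 0\}$ has probability $\prod_i(1-q_i)$, which is positive as long as no $q_i$ equals $1$; this handles the generic regime. In the degenerate case where some ancestors of $V_i$ are forced to be always active, the set of such ``always-$1$'' nodes is immediately observable from the marginals, and I would restrict the conditioning to the non-always-active parents of $V_i$, absorbing the inevitable contribution of the always-active parents into a constant prefactor that is cancelled by the same ratio-of-probabilities trick. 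Once that bookkeeping is dispatched, the entire theorem reduces to inverting an upper-triangular system of one-edge equations, which is completely routine.
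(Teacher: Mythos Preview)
Your proposal is correct and follows essentially the same approach as the paper: recover $q_i$ from $P(V_i=0\mid \Pa(V_i)=\vec 0)=1-q_i$ (or $P(V_i=1)=q_i$ for roots), then recover each $p_{j,i}$ from $P(V_i=0\mid \Pa(V_i)=\vec e_j)=(1-q_i)(1-p_{j,i})$. You are somewhat more careful than the paper in flagging the positivity of the conditioning events, which the paper's proof does not address explicitly.
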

\begin{proof}
For an observed variable (node) $V_i$, suppose that its observed parents are $V_{i_1},V_{i_2},\cdots,V_{i_t}$. Therefore, we have
\begin{scriptsize}
\begin{align}
    &P(V_i=0|V_{i_1}=0,\cdots,V_{i_t}=0)=1-q_i, \label{eq:markovian_q}\\
    &P(V_i=0|V_{i_j}=1,V_{i_1}=0,\cdots,V_{i_{j-1}}=0,V_{i_{j+1}}=0,\cdots,V_{i_t}=0)=(1-q_i)(1-p_{i_j,i}).  \label{eq:markovian_p}
\end{align}
\end{scriptsize}
From Eq.\eqref{eq:markovian_q}, we can obtain the value of $q_i$.
Then if $q_i \ne 1$, from Eq.\eqref{eq:markovian_p}, we can derive the value of $p_{i_j,i}$.
Moreover, for each root node $V_i$, we can get $q_i$ by computing $q_i = P(V_i=1)$.
The computational efficiency is obvious.\qed
\end{proof}

The theorem essentially says that all parameters are identifiable under the Markovian IC model, except for the corner case where some $q_i=1$. 
In this case, the observed variable $V_i$ is fully determined by its unobserved parent $U_i$, so we cannot determine the influence from other observed parents of $V_i$ to $V_i$.
But the influence from the observed parents of $V_i$ to $V_i$ is not useful any way in this case, so the edges from the observed parents of $V_i$ to $V_i$ will not affect the
	causal inference in the graph and they can be removed.

\section{Parameter Identifiability of the Semi-Markovian IC Model}

Following the definition in the model section, we then consider the identifiability problem of the semi-Markovian models. We will demonstrate that in most cases, this model is not parameter identifiable. Actually, from \cite{shpitser2006identification} we know that the semi-Markovian Bayesian causal model is also not identifiable in general. Essentially, our conclusion is not related to their result. On the other side, we will show that with some parameters known in advance, the semi-Markovian IC chain model will be identifiable. 

\subsection{Condition on Unidentifiability of the Semi-Markovian IC Model}

More specifically, the following theorem shows the unidentifiability of the semi-Markovian IC model with a special structure in it. 

\begin{restatable}[Unidentifiability of the Semi-Markovian IC Model]{theorem}{thmunidsemi} \label{thm:unidentifiability}
Suppose in a general graph $G$, we can find the following structure. There are three observable nodes $V_1,V_2,V_3$ such that $(V_1,V_2)\in E, (V_2,V_3)\in E$ and unobservable $U_1,U_2$ with $(U_1,V_1),(U_1,V_2),(U_2,V_2),(U_2,V_3)\in E$. Suppose each of $U_1,U_2$ only has two edges associated to it, the three nodes $V_1,V_2,V_3$ can be written adjacently in a topological order of nodes in $U\cup V$. Then we can deduce that the graph $G$ is not parameter identifiable.
\end{restatable}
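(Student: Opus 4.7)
The plan is to reduce to a local calculation on the substructure $\{U_1, U_2, V_1, V_2, V_3\}$ and then exhibit a smooth one-parameter family of distinct local parameter settings that produce identical observational distributions. Because $V_1, V_2, V_3$ appear consecutively in a topological order and $U_1, U_2$ carry only the two stipulated out-edges each, modifying the eight local parameters $\Theta_{\mathrm{loc}} = \{r_1, q_{1,1}, q_{1,2}, p_{1,2}, r_2, q_{2,1}, q_{2,2}, p_{2,3}\}$ preserves the conditional distributions of all non-substructure nodes given $(V_1, V_2, V_3)$ (since those depend only on parameters outside $\Theta_{\mathrm{loc}}$). Hence two distinct settings of $\Theta_{\mathrm{loc}}$ that produce the same conditional joint $P(V_1, V_2, V_3 \mid \text{external parents})$ for every external-parent configuration yield the same full observational distribution, proving non-identifiability of $G$.

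The first step is a reparametrization that exposes the hidden freedom. For each $i \in \{1,2\}$, set
\[
    \alpha_i := r_i q_{i,1}, \qquad \beta_i := r_i q_{i,2}, \qquad a_i := r_i q_{i,1} q_{i,2};
\]
these are in bijection with $(r_i, q_{i,1}, q_{i,2})$ on the open interior of the feasible region (inverse $r_i = \alpha_i \beta_i / a_i$), and they encode the joint law of $U_i$'s two activation events: $P(\text{both children activated}) = a_i$, $P(\text{first only}) = \alpha_i - a_i$, $P(\text{second only}) = \beta_i - a_i$, $P(\text{neither}) = 1 - \alpha_i - \beta_i + a_i$. Using mutual independence of all activation coin flips, each atom $P(V_1=v_1, V_2=v_2, V_3=v_3)$ becomes an explicit multilinear polynomial in the eight reparametrized coordinates; for instance, when all external parents take value $0$,
\[
    P(V_1=0, V_2=0, V_3=0) = (1 - \alpha_1 - \beta_1 + a_1)(1 - \alpha_2 - \beta_2 + a_2),
\]
and analogous expressions hold for the remaining seven atoms, with $(1 - p_{1,2})$ entering whenever $V_1 = 1, V_2 = 0$ and $(1 - p_{2,3})$ whenever $V_2 = 1, V_3 = 0$.

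The heart of the argument is then a dimension count made rigorous. The eight atoms sum to $1$, so only seven are independent, giving seven polynomial equations in the eight unknowns $(\alpha_1, \beta_1, a_1, p_{1,2}, \alpha_2, \beta_2, a_2, p_{2,3})$; at any regular point the fiber of the parameter-to-distribution map has dimension at least one. By the implicit function theorem, a smooth one-parameter family $\Theta_{\mathrm{loc}}(t)$ of distinct parameter vectors through a generic interior base point produces identical observables on $(V_1, V_2, V_3)$. Concretely, $\alpha_1$ is pinned by $P(V_1 = 1)$; treating $a_1$ as the free variable, one solves for the six remaining coordinates $(\beta_1, \alpha_2, \beta_2, a_2, p_{1,2}, p_{2,3})$ from the six remaining independent atoms. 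The main obstacle is verifying non-singularity of the corresponding $6\times 6$ Jacobian at a generic interior base point, a direct polynomial computation on the closed forms above; once that is in hand, feasibility of the perturbation (staying in $a_i \in [\alpha_i \beta_i, \min(\alpha_i, \beta_i)]$ and $p_{i,j} \in (0,1)$) is automatic for small $t$, giving two distinct valid settings of $\Theta_{\mathrm{loc}}$ with identical observables and hence non-identifiability of $G$.
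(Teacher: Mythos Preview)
There is a genuine gap. You correctly note that it suffices to match the conditional $P(V_1,V_2,V_3\mid\text{external parents})$ for \emph{every} configuration of the external parents, but your dimension count---eight local parameters against seven atom probabilities---only accounts for a single configuration (all external parents equal to zero). In a general $G$ the theorem allows $V_1,V_2,V_3$ to have additional observed parents; each value of those parents yields a different conditional law on $(V_1,V_2,V_3)$, hence additional polynomial constraints on $\Theta_{\mathrm{loc}}$. For example, when an observed parent activates $V_2$ exogenously, $P(V_3=0\mid V_2=1)$ equals $(1-\beta_2)(1-p_{2,3})$, whereas under the all-zero external configuration $P(V_3=0\mid V_2=1)$ is contaminated by the $A_2$--$B_2$ correlation through $U_2$ and is a different function of the parameters. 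Nothing in your argument shows that both quantities are preserved along the one-parameter curve produced by the implicit function theorem at $\pi=(0,0,0)$; a priori the extra constraints could cut the fiber down to a point.

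The paper avoids this issue by exhibiting an explicit one-parameter family through the all-$\tfrac12$ point and verifying by direct computation that the resulting $P(V_1,V_2,V_3\mid pa(V_1),pa(V_2),pa(V_3))$ is the \emph{same polynomial} in the external activation probabilities $P(V_1=1\mid pa(V_1))$ and $P(V_2=1\mid pa(V_2))$ for every member of the family, so all external-parent configurations are matched simultaneously. To repair your approach you would have to either (i) take as constraints the full list of coefficients of the multilinear map $(\pi_1,\pi_2,\pi_3)\mapsto P(V_1,V_2,V_3\mid\pi)$ and redo the rank bound, or (ii) verify after the fact that the curve you obtain at $\pi=0$ also preserves the observables at every other $\pi$.
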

 Figure \ref{fig:two_triangle_ex} is an example of the structure described in the above theorem.
% \begin{theorem}
% 	\label{thm:unidentifiability}
% 	Suppose in a general graph $G$, we can find the following structure. There are three observable nodes $V_1,V_2,V_3$ such that $(V_1,V_2)\in E, (V_2,V_3)\in E$ and unobservable $U_1,U_2$ with $(U_1,V_1),(U_1,V_2),(U_2,V_2),(U_2,V_3)\in E$. Suppose each of $U_1,U_2$ only has two edges associated to it, the three nodes $V_1,V_2,V_3$ can be written adjacently in a topological order of nodes in $U\cup V$. Then we can deduce that the graph $G$ is not parameter identifiable. Figure \ref{fig:two_triangle_ex} is an example of this structure.
% \end{theorem}

\begin{figure}[htbp]
	\centering
	\includegraphics[width=0.7\linewidth]{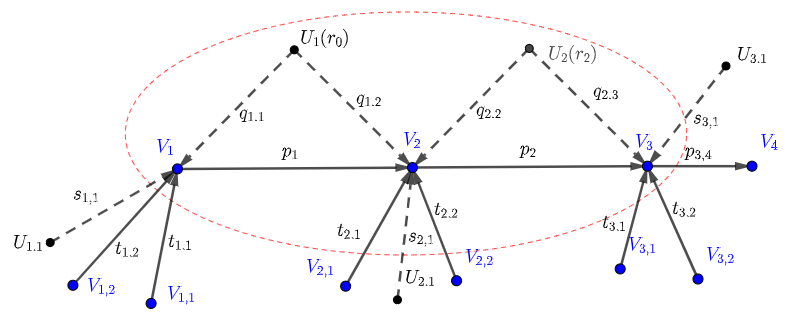}
	\caption{An example of the structure in Theorem \ref{thm:unidentifiability}.}
	\label{fig:two_triangle_ex}
\end{figure}

\begin{proof}[Outline]
To prove that the parameters in the model with this structure are not identifiable, we give two different sets of parameters directly. We show that these two different sets of parameters produce the same distribution of nodes in $V$, and thus the set of parameters is not identifiable by observing only the distribution of $V$. \OnlyInShort{The details of these two sets of parameters and the distributions they produce are included 
	in the full technical report~\cite{DBLP:journals/corr/abs-2107-04224}.}\OnlyInFull{The details of these two sets of parameters and the distributions they produce are included 
	in Appendix~\ref{app:thmunidsemi}.}
\qed
\end{proof}

\subsection{Identifiability of the Chain Model}

We now consider the chain model as described in Section~\ref{sec:model} and depicted in Figure \ref{fig:semi_markovian_chain}. In this structure, we present a conclusion of identifiability under the assumption that the valuations of some parameters are our prior knowledge.

We divide the parameters of the graph into four vectors 
\begin{align}
    &{\bq}_1=(q_{1,1},q_{2,1},\cdots,q_{n-1,1}),{\bq}_2=(q_{1,2},q_{2,2},\cdots,q_{n-1,2}),\\
    &{\bp}=(p_1,p_2,\cdots,p_{n-1}),{\br}=(r_1,r_2,\cdots,r_{n-1}).
\end{align}

For the chain model, our theorem below shows that once the parameters  $p_1$ is known, ${\bq}_2$ or ${\br}$ is known, the set consists of remaining parameters in the chain is efficiently identifiable.

\begin{restatable}[Identifiability of the Semi-Markovian IC Chain Model]{theorem}{thmidofsemi} \label{thm:id_of_semi}
Suppose that we have a semi-Markovian IC chain model with the graph $G=(U,V,E)$ and the IC parameters ${\bp}=(p_i)_{i\in [n-1]}$, ${\bq}_1=(q_{i,1})_{i\in[n-1]}$, 
${\bq}_2=(q_{i,2})_{i\in[n-1]}$ and ${\br}=(r_i)_{i\in[n-1]}$, and
suppose that all parameters are in the range $(0,1)$.
If the values of parameter $p_1$ is known, ${\bq}_2$ or ${\br}$ is known, then the remaining parameters are efficiently identifiable.
\end{restatable}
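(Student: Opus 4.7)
The plan is to proceed by forward induction along the chain. I focus on the case where $p_1$ and $\bq_2$ are known; the case where $\br$ is known instead is analogous and actually slightly simpler, since knowledge of $r_k$ immediately converts any product $r_k q_{k,1}$ into $q_{k,1}$. At each step $k$, I would use the joint distribution of three consecutive observed nodes $V_{k-1}, V_k, V_{k+1}$ to identify $r_k$, $q_{k,1}$, and $p_k$, together with the partial product $s_{k+1} := r_{k+1} q_{k+1,1}$ that is carried into the next step.

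For the base step, $P(V_1 = 1) = r_1 q_{1,1}$ gives $s_1$, and $P(V_2 = 0 \mid V_1 = 1) = (1-p_1)(1-q_{1,2})(1-s_2)$ together with the knowledge of $p_1$ and $q_{1,2}$ yields $s_2$. The identity $P(V_2 = 0 \mid V_1 = 0) = (1-q_{1,2}\pi_0)(1-s_2)$, where $\pi_0 := P(U_1 = 1 \mid V_1 = 0) = r_1(1-q_{1,1})/(1-s_1)$, then determines $\pi_0$, which combined with $s_1$ recovers $r_1$ and $q_{1,1}$ separately.

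For the inductive step, let $\eta_v := P(U_{k-1} = 1 \mid V_{k-1} = v)$ (computable from the already identified upstream parameters by a forward pass along the chain) and $\phi_v := (1 - p_{k-1} v)(1 - q_{k-1,2}\eta_v)$. A short computation that marginalizes $U_k$ and $U_{k+1}$ gives
\[
P(V_{k+1} = 0 \mid V_{k-1} = v, V_k = 1) = (1-p_k)(1-q_{k,2}\gamma^{(k)}_{v,1})(1-s_{k+1}),
\]
where $\gamma^{(k)}_{v,1} = [(1-\phi_v)\, r_k + \phi_v s_k]/[1-\phi_v(1-s_k)]$ is affine in $r_k$, and similarly $P(V_{k+1} = 0 \mid V_k = 0) = (1-q_{k,2}\tau_k)(1-s_{k+1})$ with $\tau_k = r_k(1-q_{k,1})/(1-s_k)$. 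Dividing the first identity for $v=0$ and $v=1$ cancels the unknown factor $(1-p_k)(1-s_{k+1})$ and produces a M\"obius equation in $r_k$ alone; solving yields $r_k$, hence $q_{k,1} = s_k/r_k$ and $\tau_k$. Substituting into the second identity then recovers $s_{k+1}$, and returning to the first identity yields $p_k$. The final iteration at $V_n$, which has no $U_n$, follows the same template with the simpler identities $P(V_n = 0 \mid V_{n-1} = 0) = 1 - q_{n-1,2}\tau_{n-1}$ and $P(V_n = 0 \mid V_{n-1} = 1, V_{n-2} = v) = (1-p_{n-1})(1-q_{n-1,2}\gamma^{(n-1)}_{v,1})$.

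The hard part will be verifying that the M\"obius equation is never degenerate, so that $r_k$ is uniquely determined. A direct computation shows that the quantity controlling degeneracy factors as $q_{k,2}\, s_k\, (\phi_0 - \phi_1)(1 - q_{k,2})/(D_0 D_1)$ with $D_v := 1 - \phi_v(1-s_k)$, so it suffices to establish $\phi_0 \ne \phi_1$. This reduces to the identity $\phi_0 - \phi_1 = p_{k-1}(1 - q_{k-1,2}\eta_1) + q_{k-1,2}(\eta_1 - \eta_0)$, which is strictly positive whenever $\eta_1 > \eta_0$, and the latter is the standard monotonicity property stating that observing $V_{k-1} = 1$ is positive evidence for $U_{k-1} = 1$ under parameters in $(0,1)$. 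Efficiency is immediate: each inductive step uses $O(1)$ joint marginals and a constant amount of algebra, and the recursive computation of $\eta_v$ takes polynomial time in the chain length.
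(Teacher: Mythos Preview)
Your proposal is correct and shares the paper's overall strategy of forward induction along the chain, at each stage using a small number of observed probabilities to recover the next batch of parameters together with the carried product $s_{k+1}=r_{k+1}q_{k+1,1}$. The execution, however, is genuinely different. The paper conditions on the \emph{full} prefix $(V_1,\ldots,V_{t})$ via specific bit strings, writes the relations for $a^{\overline{\gamma 0}}$ in terms of $q_{t-1,2}$ and $q_{t-1,2}r_{t-1}$, and proves non-degeneracy by an explicit algebraic inequality (Lemmas~1--2 and the positivity computation $q_{t-1,1}r_{t-1}(q_{t-1,2}-1)-(q_{t-1,2}r_{t-1}-1)>0$); it also treats the cases $t=n$, $3\le t\le n-1$, and $t=2$ separately. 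You instead marginalize over the past and work with conditionals of only three consecutive nodes, carrying forward the posterior $\eta_v=P(U_{k-1}=1\mid V_{k-1}=v)$; the unknown at each step is $r_k$ alone, and non-degeneracy factors cleanly as $q_{k,2}s_k(1-q_{k,2})(\phi_0-\phi_1)$, reducing to the probabilistic monotonicity $\eta_1>\eta_0$. Your route yields a more uniform treatment across steps and a conceptually cleaner non-degeneracy argument; the paper's route avoids the need to compute the posteriors $\eta_v$ (it works directly with the joint quantities $a^\gamma,b^\gamma,c^\gamma$). Both lead to the same efficiency guarantee.
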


\begin{proof}[Outline]
We use induction to prove this theorem. Under the assumption that $p_1$ is known and ${\bq}_2$ or ${\br}$ is known, suppose $p_1,p_2,\cdots,p_{t-2}$, $r_1,r_2,\cdots,r_{t-2}$, $q_{1,1},q_{2,1},\cdots,q_{t-2,1}$ and $q_{1,2},q_{2,2},\cdots,q_{t-2,2},r_{ t-1}q_{t-1,1}$ has been determined by us, and we prove that $q_{t-1,1},r_{t-1},p_{t-1},q_{t-1,2}$ and $r_tq_{t,1}$ can also be determined. In fact, by the distribution of the first $t$ nodes on the chain we can obtain three different equations, and after substituting our known parameters, the inductive transition can be completed. It is worthy noting that this inductive process can also be used to compute the unknown parameters efficiently. 

The proof is lengthy because of the many corner cases considered and the need to discuss the cases $t=n,t=2$ and $2<t<n$. \OnlyInFull{The details of this proof are included 
	in Appendix~\ref{app:thmidofsemi}.} \hfill $\Box$
\end{proof}

%\begin{theorem}[Identifiability of the Bow-type Semi-Markovian IC Chain Model]
%\label{thm:id_of_semi}
%Suppose that we have a semi-Markovian IC chain model with the graph $G=(U,V,E)$ and the IC parameters ${\bp}=(p_i)_{i\in [n-1]}$, ${\bq}_1=(q_{i,1})_{i\in[n-1]}$, 
%	${\bq}_2=(q_{i,2})_{i\in[n-1]}$ and ${\br}=(r_i)_{i\in[n-1]}$, and
%	suppose that all parameters are in the range $(0,1)$.
%If the values of parameter $p_1$ is known, ${\bq}_2$ or ${\br}$ is known, then the remaining parameters are efficiently identifiable.
%\end{theorem}

According to Theorem \ref{thm:id_of_semi} we get that the semi-Markovian chain is parameter identifiable in the case that $n$ particular parameters are known. Simultaneously, by Theorem \ref{thm:unidentifiability}, we can show that if just less than $\lfloor\frac{n+1}{2}\rfloor$ parameters are known, then this semi-Markovian chain will not be parameter identifiable. Actually, if the chain model is parameter identifiable, utilizing Theorem \ref{thm:unidentifiability}, we know that for each $2\leq t\leq n-1$, at least one of parameters between $p_{t-1},p_t,r_{t-1},r_t,q_{t-1,1},q_{t-1,2},q_{t,1}$ and $q_{t,2}$ should be known. Therefore, we let $t=2,4,\cdots,2\lfloor\frac{n-1}{2}\rfloor$, we can deduce that at least $\lfloor\frac{n-1}{2}\rfloor$ should be known. Formally, we have the following collary of Theorem \ref{thm:unidentifiability} and Theorem \ref{thm:id_of_semi}.

\begin{corollary}
For a semi-Markovian IC chain model, if no more than $\lfloor\frac{n-1}{2}\rfloor$ parameters are known in advance, the remaining parameters are unidentifiable; if it is allowed to know $n$ parameters in advance, we can choose  $p_1,{\bq}_2$ or $p_1,{\br}$ to be known, then the remaining parameters are identifiable.
\end{corollary}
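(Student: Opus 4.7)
The plan is to prove the two halves of the corollary separately, leaning on Theorem~\ref{thm:unidentifiability} for the unidentifiability direction and Theorem~\ref{thm:id_of_semi} for the identifiability direction.

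For the unidentifiability half, the key observation is that for each index $t$ with $2 \le t \le n-1$, the chain contains exactly the two-triangle structure required by Theorem~\ref{thm:unidentifiability}: the path $V_{t-1} \to V_t \to V_{t+1}$ together with $U_{t-1}$ pointing to both $V_{t-1}$ and $V_t$ and $U_t$ pointing to both $V_t$ and $V_{t+1}$, with $V_{t-1}, V_t, V_{t+1}$ occupying consecutive positions in a topological order of $U \cup V$. I would first apply Theorem~\ref{thm:unidentifiability} locally: if none of the eight parameters $p_{t-1}, p_t, r_{t-1}, r_t, q_{t-1,1}, q_{t-1,2}, q_{t,1}, q_{t,2}$ is known, the construction underlying that theorem produces two distinct valuations of these eight values that induce the same joint distribution on all of $V_1, \ldots, V_n$, which blocks identifiability. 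Next, I would note that the windows indexed by $t = 2, 4, \ldots, 2\lfloor(n-1)/2\rfloor$ use pairwise disjoint parameter sets, because window $t$ only involves chain-indices $t-1$ and $t$ and consecutive even values of $t$ differ by two. A pigeonhole count then yields the desired conclusion: once fewer than $\lfloor(n-1)/2\rfloor$ parameters are specified in advance, at least one of these disjoint even-indexed windows is completely free, and Theorem~\ref{thm:unidentifiability} applied to that window delivers the required pair of indistinguishable parameter settings.

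For the identifiability half, the argument is immediate. The vectors $\bq_2$ and $\br$ each have length $n-1$, so specifying $p_1$ together with either vector supplies exactly $n$ known values. Theorem~\ref{thm:id_of_semi} then states directly that the remaining parameters are efficiently identifiable, which is precisely the second assertion of the corollary.

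The main obstacle will be verifying the locality of the perturbation produced by Theorem~\ref{thm:unidentifiability}: I need the two alternative parameter settings supplied by that theorem's proof to agree on every parameter outside the current window, so that holding the outside parameters fixed (whether known or unknown) is consistent with both settings and the two settings continue to induce the same full-chain distribution. If, as the proof outline of Theorem~\ref{thm:unidentifiability} suggests, the construction only modifies the eight parameters attached to the local two-triangle substructure, the lifting to the full chain is automatic because the downstream nodes $V_{t+2}, \ldots, V_n$ interact with the window only through $V_{t+1}$ and the upstream nodes only through $V_{t-1}$, both of whose joint behavior is preserved by hypothesis. Pinning down this locality rigorously and then assembling the disjoint-window counting is where the care concentrates; the identifiability half requires no further work once Theorem~\ref{thm:id_of_semi} is invoked.
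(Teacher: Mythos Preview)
Your proposal is correct and follows essentially the same route as the paper: for the unidentifiability direction you invoke Theorem~\ref{thm:unidentifiability} on the local two-triangle at each $t$, observe that the even-indexed windows $t=2,4,\ldots,2\lfloor(n-1)/2\rfloor$ have disjoint parameter sets, and use pigeonhole to find a free window; for the identifiability direction you simply count that $p_1$ together with $\bq_2$ or $\br$ gives $n$ known values and appeal to Theorem~\ref{thm:id_of_semi}. Your explicit attention to the locality of the perturbation in Theorem~\ref{thm:unidentifiability} (that only the window parameters are altered, so the rest of the chain's distribution is unaffected) is a point the paper leaves implicit, and it is indeed the one place requiring care.
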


\section{Parameter Identifiability of Model with a Global Hidden Variable}

Next, we consider the case where there is a global hidden variable in the causal IC model, defined as those in Section \ref{sec:model}. If there is only one hidden variable $U_0$ in the whole model, we prove that the parameters in general in this model are identifiable; if there is not only $U_0$, the model is also Markovian, that is, there are also $n$ hidden variables $U_1,\cdots,U_n$ corresponding to $V_1,V_2,\cdots,V_n$, then the parameters in this model are identifiable if certain conditions are satisfied.

\subsection{Observable IC Model with Only a Global Hidden Variable}

Suppose the observed variables in the connected DAG graph $G=(U,V,E)$ are $V_1,V_2,\cdots,V_n$ in a topological order and there is a global hidden variable $U_0$ such that there exists an edge from $U_0$ to the node for each observable variable $V_i$. Suppose the activating probability of $U_0$ is $r$ and the activating probability from $U$ to $V_i$ is $q_i\in [0,1)$ (naturally, $q_1\neq 0$ and there are at least $3$ of nonzero $q_i$'s). Now we propose a theorem according to these settings.

\begin{restatable}[Identifiability of the IC Model with a Global Hidden Variable]{theorem}{thmidofglob} \label{thm:id_of_glob}
For an arbitrary IC model with a global hidden variable $G=(U,V,E)$ with parameters ${\bf q}=(q_i)_{i\in[n]}$, $r$ and ${\bf p}=(p_{i,j})_{(V_i,V_j)\in E}$ such that $q_i\ne 1,p_{i,j}\ne 1$ and $r\ne 1$ for $\forall i,j\in[n]$, all the parameters in ${\bf p},r$ and ${\bf q}$ are identifiable.
\end{restatable}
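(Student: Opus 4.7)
Two structural observations underpin my plan. First, because $V_1$ is the first node in topological order, its only parent is $U_0$, so observing $V_1 = 1$ deterministically forces $U_0 = 1$; hence conditioning on $V_1 = 1$ accesses the $U_0 = 1$ component of the two-term mixture $P(V) = (1-r)\, P(V\mid U_0 = 0) + r\, P(V\mid U_0 = 1)$, while the $U_0 = 0$ component places all its mass on $V = \vec{0}$. Second, for every subset $S \subseteq V$ closed under taking observable ancestors,
\[
P(S = \vec{0}) = (1-r) + r \prod_{V_j \in S}(1-q_j),
\]
because under $U_0 = 1$ the event $\{S = \vec{0}\}$ requires each $V_j \in S$ to fail to be activated by $U_0$, its observable parents being already forced to $0$. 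These two identities cleanly separate the roles of $r$, $\bq$, and $\bp$.

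Using the first observation, I condition on $V_1 = 1$, which turns the conditional law of $V_2,\dots,V_n$ into a Markovian IC distribution with $V_1$ pre-activated and each $V_i$ ($i \geq 2$) equipped with its own Markovian hidden parent of strength $q_i$. Mirroring the proof of the Markovian IC theorem, this identifies: (a) $q_i$ for every $V_i$ with $V_1 \notin \Pa(V_i)$, via $P(V_i = 0 \mid V_1 = 1, \Pa(V_i) \cap V = \vec{0}) = 1-q_i$; (b) the product $(1-q_i)(1-p_{1,i})$ for every $V_i$ with $V_1 \in \Pa(V_i)$, via the same query with only the non-$V_1$ parents of $V_i$ set to $\vec{0}$; and (c) every $p_{j,i}$ with $j \ne 1$, by flipping one non-$V_1$ parent from $0$ to $1$ and taking a ratio.

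The main obstacle is identifying $r$; once $r$ is known, $q_1 = P(V_1 = 1)/r$, each remaining $q_i$ is recovered inductively in topological order from the downward-closed identity, and each $p_{1,i}$ follows from the already-known product in (b). We know $r q_1 = P(V_1 = 1)$, so one extra equation suffices to pin down $r$. Differencing the downward-closed identity applied to $A_c$ and to $A_c \cup \{V_c\}$, where $A_c$ denotes the observable ancestors of $V_c$, gives
\[
P(A_c = \vec{0},\, V_c = 1) = r\, q_c \prod_{V_k \in A_c}(1-q_k).
\]
When some $V_c$ with $q_c > 0$ has $q_c$ identified by (a) and all its nonzero-$q$ ancestors are also identified (inductively in topological order), this equation determines $r$. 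Otherwise, when every nonzero-$q$ node other than $V_1$ has $V_1$ as a direct parent, I use the ratio identity
\[
\frac{P(V_1 = 0, V_a = 1)\, P(V_1 = 0, V_b = 1)}{P(V_1 = 0, V_a = 1, V_b = 1)\, P(V_1 = 0)} = \frac{r(1-q_1)}{1 - r q_1},
\]
valid for any nonzero-$q$ pair $V_a, V_b$ that are $d$-separated in $V \setminus \{V_1\}$ (equivalently, conditionally independent given $V_1$ and $U_0 = 1$). A case analysis by the structure of the nonzero-$q$ nodes, enabled by the hypothesis that at least three $q_i$'s are nonzero, ensures one of these routes applies; residual triangle-like cases, where neither a clean $V_c$ nor a $d$-separated pair is immediately available, are handled by combining both identities with the $p_{a,b}$ identified in (c) and three-way joint events such as $P(V_1 = 0, V_a = 1, V_b = 1)$, which supply the extra equation separating $r$ from $q_1$.
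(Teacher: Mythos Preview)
Your approach is a legitimate alternative to the paper's and the key device—conditioning on $V_1=1$ to force $U_0=1$ and thereby reduce the conditional law of $V_2,\dots,V_n$ to a Markovian IC model—is correct and elegant. Steps (a), (b), (c) are valid, and your downward-closed identity $P(S=\vec 0)=(1-r)+r\prod_{V_j\in S}(1-q_j)$ is exactly the right tool once $r$ is known. The paper organizes the argument differently: it splits on whether two nonzero-$q$ nodes are non-adjacent (its Case~1) versus the case where the nonzero-$q$ nodes induce a clique (its Case~2), and in each case writes explicit ratio formulas rather than invoking the Markovian theorem.

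Two remarks. First, your hypothesis for route~1 is stronger than needed: you do \emph{not} require the nonzero-$q$ ancestors of $V_c$ to be identified. Knowing $q_c$ alone suffices, since $P(A_c=\vec 0,\,V_c=1)/q_c = r\prod_{V_k\in A_c}(1-q_k)$ together with $P(A_c=\vec 0)=(1-r)+r\prod_{V_k\in A_c}(1-q_k)$ is a linear system in $(1-r)$ and $r\prod(1-q_k)$ with nonzero determinant. With this correction, route~1 applies whenever some nonzero-$q$ node $V_c$ satisfies $V_1\notin\Pa(V_c)$, i.e., whenever $V_1$ is not adjacent to every other nonzero-$q$ node.

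Second, and this is the genuine gap: your ``residual triangle-like cases'' are precisely the hard part, and you do not prove them. When every nonzero-$q$ node other than $V_1$ has $V_1$ as a parent and no two of them are conditionally independent given $\{V_1,U_0\}$, neither route applies, and you only assert that combining identities with $p_{a,b}$ from (c) and three-way events ``supply the extra equation.'' That sentence is not an argument; in particular you do not say which $q_c$ gets isolated or why the resulting system is nondegenerate. The paper's Case~2 shows what is actually required: with three mutually adjacent nonzero-$q$ nodes $V_i,V_j,V_k$ it writes four explicit ratio equations—three of the shape $(1-q_\ell)(1-p_{\cdot,\ell})$ and a fourth coming from the event $\{V_i=1,V_j=1,V_k=0\}$—and solves them to isolate $q_k$. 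Once a single $q_k$ is in hand, your own (corrected) route~1 recovers $r$. Your plan can certainly be completed along these lines, but as written the residual case is a hole, not a proof.
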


% \begin{theorem}[Identifiability of the IC Model with a Global Hidden Variable]
% For an arbitrary IC model with a global hidden variable $G=(U,V,E)$ with parameters ${\bf q}=(q_i)_{i\in[n]}$, $r$ and ${\bf p}=(p_{i,j})_{(V_i,V_j)\in E}$ such that $q_i\ne 1,p_{i,j}\ne 1$ and $r\ne 1$ for $\forall i,j\in[n]$, then all the parameters in ${\bf p},r$ and ${\bf q}$ are identifiable.
% \end{theorem}

\begin{proof}[Outline]
We discuss this problem in two cases, the first one is the existence of two disconnected points $V_i,V_j,i<j$ in $V$ and $q_i,q_j\neq 0$. At this point we can use $1-q_j=\frac{P(V_1=0,V_2=0,\cdots,V_i=1,V_{i+1}=0,\cdots,V_j=0)}{P(V_1=0,V_2=0,\cdots,V_i=1,V_{i+1}=0,\cdots,V_{j-1}=0)}$ to solve out $q_j$, and then use $P(V_1=0,V_2=0,\cdots,V_j=0)$ and $P(V_1=0,V_2=0,\cdots,V_{j-1}=0)$ to solve out $r$.

After getting $r$, by the quotients of probabilities of propagating results, we can get all the parameters.

Another case is that there is no $V_i,V_j$ as described above. At this point there must exist three points $V_i,V_j,V_k$ that are connected with each other and $q_i,q_j,q_k\neq 0$. We observe the probabilities of different possible propagating results of these three points with all other nodes are $0$ after the propagation. From these, we can solve out $q_i,q_j,q_k$, and then solve out all parameters by the same method as in the first case.\qed
\end{proof}

\subsection{Markovian IC Model with a Global Hidden Variable (Mixed Model)}

Suppose the model is $G=(U,V,E)$, where $U=\{U_0,U_1,U_2,\cdots,U_n\}$, $V=\{V_1,V_2,\cdots,V_n\}$. Here, $V_1,V_2,\cdots,V_n$ are in a topological order. The parameters are $r_0$, ${\bf q}_0=(q_{0,i})_{i\in[n]}$, ${\bf q}=(q_i)_{i\in [n]}$ and ${\bf p}=(p_{i,j})_{(V_i,V_j)\in E}$.

\begin{restatable}[Identifiability of Markovian IC Model with a Global Hidden Variable (Mixed Model)]{theorem}{thmmixedmodel} \label{thm:mixedmodel}
For an arbitrary Markovian IC Model with a Global Hidden Variable $G=(U,V,E)$ with parameters $r_0$, ${\bf q}_0=(q_{0,i})_{i\in[n]}$, ${\bf q}=(q_i)_{i\in [n]}$ and ${\bf p}=(p_{i,j})_{(V_i,V_j)\in E}$, we suppose that all the parameters are not $1$. If $\exists i,j,k\in[n],i<j<k$ such that each pair in $V_i,V_j,V_k$ are disconnected and $q_{0,i},q_{0,j},q_{0,k}\neq 0$, then the parameters $q_{0,t},q_t$ and $p_{t,l}, l>t>k$ are identifiable. Moreover, if $V_i,V_j,V_k$ can be adjacently continuous in some topological order, i.e. $j=i+1,k=i+2$ without loss of generality, all the parameters are identifiable. 
\end{restatable}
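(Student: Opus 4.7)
The core tool I would use is a three-pure-view identification of a two-component mixture, where the global latent $U_0$ plays the role of the mixture class and the pairwise disconnected observables $V_i, V_j, V_k$ play the role of three conditionally independent binary views. Specifically, I would condition on the event $\mathcal{E}_0$ that every observable ancestor of $\{V_i, V_j, V_k\}$ takes value $0$; this event has strictly positive probability since all parameters lie strictly below $1$. Because the three nodes are pairwise disconnected and all observable paths among them are blocked under $\mathcal{E}_0$, the three variables become conditionally independent given $U_0$ with the clean product form $P(V_\ell = 0 \mid \mathcal{E}_0, U_0 = u) = (1-q_\ell)(1 - u\, q_{0,\ell})$ for $\ell \in \{i, j, k\}$. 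Since $q_{0,i}, q_{0,j}, q_{0,k} \neq 0$, the two latent components differ in every coordinate, which is exactly the nondegeneracy required for the standard identifiability of a two-component product mixture on three binary views; the seven free entries of the conditional joint then yield the posterior mixing weight $\widetilde{r} := P(U_0 = 1 \mid \mathcal{E}_0)$ together with the six parameters $q_\ell, q_{0,\ell}$ for $\ell \in \{i, j, k\}$.

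Next, I would iterate over $t > k$ in topological order. For each such $V_t$, extend $\mathcal{E}_0$ to an event $\mathcal{E}_t$ that additionally zeroes every $V_s$ with $k < s < t$ that is not a parent of $V_t$, while leaving the remaining parents of $V_t$ free. Keeping $V_i, V_j, V_k$ in play as the anchor of the mixture, the conditional distribution under $\mathcal{E}_t$ is still a two-component mixture whose anchor-view parameters are already known, so the two latent components remain separable in the extended joint. Inside each component the structural equation reads $P(V_t = 0 \mid \{V_\ell = a_\ell\}_\ell, U_0 = u, \mathcal{E}_t) = (1-q_t)(1 - u\, q_{0,t}) \prod_{\ell \in P_t}(1 - p_{\ell, t})^{a_\ell}$, where $P_t$ is the set of free parents of $V_t$. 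Evaluating at all-zero free coordinates gives $1-q_t$ at $u=0$ and $(1-q_t)(1-q_{0,t})$ at $u=1$, determining $q_t$ and $q_{0,t}$; toggling one $a_\ell = 1$ at a time then gives each $p_{\ell, t}$. This inductive procedure identifies every $q_t, q_{0,t}$ and every $p_{s, t}$ with $s, t > k$, which is the first claim of the theorem.

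For the ``moreover'' part, once $V_i, V_j, V_k$ are placed at consecutive positions $i, i+1, i+2$ of a topological order, the same three-view anchor supports a symmetric backward induction over $s < i$. Freeing one prior variable $V_s$ at a time (together with any additional free variables needed to cover edges into or out of $V_s$), the same product-form argument recovers $q_s, q_{0,s}$ and every $p$-parameter incident to $V_s$. Once every $q$, $q_0$, $p$ parameter is in hand, the identity $\widetilde{r} = r_0\, \phi(1)/\bigl[(1-r_0)\,\phi(0) + r_0\, \phi(1)\bigr]$, with $\phi(u) = P(\mathcal{E}_0 \mid U_0 = u)$ now a computable function of the identified ancestor parameters, becomes a single scalar equation in $r_0$ and is solved. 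The main obstacle I expect is the bookkeeping needed to enlarge the free set for each new variable while keeping the anchor views $V_i, V_j, V_k$ effective as three pure views --- verifying at each step that the new conditioning leaves the two latent components distinguishable along the anchor coordinates, and ruling out algebraic coincidences that would collapse the mixture; the assumption that no parameter equals $1$ is precisely what keeps those degeneracies at bay.
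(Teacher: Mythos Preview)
Your core strategy---treating the disconnected triple $V_i,V_j,V_k$ as three conditionally independent binary views of the binary latent $U_0$ and invoking two-component product-mixture identifiability---is exactly the engine the paper uses. The paper simply writes the resulting eight-equation system out explicitly (its observables $p_1,\ldots,p_8$) and solves in closed form for $a_l=P(V_1{=}\cdots{=}V_l{=}0,\,U_0{=}1)$ and $b_l=P(V_1{=}\cdots{=}V_l{=}0,\,U_0{=}0)$, which is the same content as your black-box invocation.

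There is one genuine gap in your forward step. You claim that under the extended event $\mathcal{E}_t$ the anchor-view parameters are ``already known.'' They are not: nothing in the hypotheses forbids an edge $V_i\to V_s$ for some $s\in(k,t)$ that you zero in $\mathcal{E}_t$, and conditioning on such a descendant shifts the law of $V_i$ within each $U_0$-component by a factor $\prod_s(1-p_{i,s})$ involving $p$-parameters that are \emph{not} among those you have identified (they have source index $\le k$). So the anchor views drift with $t$, and you cannot simply reuse the values obtained at $\mathcal{E}_0$. The paper sidesteps this by re-solving the full three-view system afresh at every level $l\ge k$: its $x_{i,l},y_{i,l}$ are precisely these drifted anchor parameters, treated as unknowns each time together with $a_l,b_l$. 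Your induction is easily repaired the same way---re-run the mixture identification at each level rather than carrying anchor values forward---but as written the step is incorrect. Once $a_l,b_l$ are known for all $l\ge k$, the paper reads off $q_t,q_{0,t}$ from $a_t/a_{t-1}$ and $b_t/b_{t-1}$, and obtains each $p_{t,l}$ by flipping the single bit $V_t$ to $1$ with everything else zeroed; this is cleaner than freeing all parents of $V_t$ simultaneously.

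For the ``moreover'' part your backward induction is workable, but the paper's route is both simpler and avoids the bookkeeping you flag. Because $i,i{+}1,i{+}2$ are consecutive, the identical eight-equation trick can be run with an \emph{arbitrary} prefix $\gamma\in\{0,1\}^{i-1}$ replacing the all-zeros prefix; the product identities still hold since nothing sits between the anchors. This yields $P(\overline{V_1\cdots V_{i-1}}{=}\gamma,\,V_i{=}*,V_j{=}*,V_k{=}*,\,U_0{=}u)$ for every $\gamma$ and both $u$, hence the full joint of $(V_1,\ldots,V_k,U_0)$. With $U_0$ now effectively observed, all remaining parameters (including $r_0$) follow from the Markovian argument, without ever needing to free predecessors one at a time or worry about which extra variables to release.
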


\begin{proof}[Outline]
	Assuming that there exist $V_i,V_j,V_k$ that satisfy the requirements of the theorem, then we can write expressions for the distribution of these three parameters when all other nodes with subscripts not greater than $l$ are equal to $0$. In fact, we can see that with these $8$ expressions, we can solve for $P(V_1=0,\cdots,V_l=0,U_0=1)$ and $P(V_1=0,\cdots,V_l=0,U_0=0)$.

    Since we have $P(V_1=0,\cdots,V_l=0,U_0=1)=r\prod_{t=1}^l(1-q_t)(1-q_{0,t})$ and $P(V_1=0,\cdots,V_l=0,U_0=0)=(1-r)\prod_{t=1}^l(1-q_t)$,
    % \begin{align}
    %     &P(V_1=0,\cdots,V_l=0,U_0=1)=r\prod_{t=1}^l(1-q_t)(1-q_{0,t}),\\
    %     &P(V_1=0,\cdots,V_l=0,U_0=0)=(1-r)\prod_{t=1}^l(1-q_t),
    % \end{align}
    we will be able to obtain all the parameters very easily by dividing these equations two by two. This proof has some trivial discussion to show that this computational method does not fail due to corner cases. \qed
\end{proof}

Notice that the parameters in this model are identifiable when and only when a special three-node structure appears in it. Intuitively, this is because through this structure we can more easily obtain some information about the parameters, which does not contradict the intuition of Theorem \ref{thm:unidentifiability}.

\section{Conclusion}

In this paper, we study the parameter identifiability of the independent cascade model in influence propagation and show
	conditions on identifiability or unidentifiability for several classes of causal IC model structure.
We believe that the incorporation of observed confounding factors and causal inference techniques is important in the next
	step of influence propagation research and identifiability of the IC model is our first step towards this goal.
There are many open problems and directions in combining causal inference and propagation research.
For example, seed selection and influence maximization correspond to the intervention (or do effect) in causal inference,
	and how to compute such intervention effect under the network with unobserved confounders and how to do influence maximization is a very interesting research question.
In terms of identifiability, one can also investigate the identifiability of the intervention effect, or whether given some
	intervention effect one can identify more of such effects.
\OnlyInFull{One can also look into identifiability in the general cyclic IC models, for which we provide some initial discussions
	in Appendix~\ref{app:cyclic}, but more investigations are needed.}

%
% ---- Bibliography ----
%
% BibTeX users should specify bibliography style 'splncs04'.
% References will then be sorted and formatted in the correct style.
%
\bibliographystyle{splncs04}
\bibliography{references}

\OnlyInFull{ \clearpage

\appendix

\section*{Appendix}

\section{Proof for Unidentifiability of the Semi-Markovian IC Model  (Theorem~\ref{thm:unidentifiability})}
\label{app:thmunidsemi}

\thmunidsemi*

\begin{proof}
	To prove that the parameters are unidentifiable, we will construct two different sets of valuations of parameters such that two distributions of values taken by the nodes in $V$ are the same. In fact, we assume that the parent nodes of $V_i$ are $V_{i,1},V_{i,2},\cdots,V_{i,t_i}$ and $U_{i,1},\cdots,U_{i,s_i}$ for $i=1,2,3$. And, the parameters are set as shown in Figure \ref{fig:two_triangle_ex}. 
	
	One set of parameters is that all the parameters are set to $0.5$. Another set of parameters is that all the parameters are set to be $0.5$ except $r_1,r_2,q_{1,1},q_{1,2},q_{2,2}$ and $q_{2,3}$. The exceptions are set to be
	\begin{align}
	&r_1=\frac{10r_2-7}{12r_2-10},q_{1,1}=\frac{1}{4r_1},\\
	&q_{1,2}=\frac{6r_2-5}{8r_2-8},q_{2,1}=\frac{1}{3-2r_2},q_{2,2}=\frac{1}{4r_2}
	\end{align}
	where $r_2$ is an arbitrary number between $\frac{1}{4}$ and $\frac{9}{14}$. Then we only need to prove that for an arbitrary distribution of the ancestors of $V_1,V_2,V_3$ except $U_1,U_2$, the distributions of $V_1,V_2,V_3$ are the same for the two parameter settings. This is because if this condition is satisfied, then the children of $V_1,V_2,V_3$ will not be affected by the differences of the parameter valuations because the parameters determining them are the same (only the distribution of $U_1,U_2$ will change but each of them only has two children that are in $\{V_1,V_2,V_3\}$). In fact, we have
	\begin{small}
		\begin{align}
		&P(V_1=1,V_2=1,V_3=1|pa(V_1),pa(V_2),pa(V_3))\\&=\frac{13+3P(V_2=1|pa(V_2)+3P(V_1=1|pa(V_2)(11+5P(V_2=1|pa(V_2))}{128},\\
		&P(V_1=1,V_2=1,V_3=0|pa(V_1),pa(V_2),pa(V_3))=\frac{23+9P(V_2=1|pa(V_2)}{64},\\
		&P(V_1=1,V_2=0,V_3=1|pa(V_1),pa(V_2),pa(V_3))\\&=\frac{3(1+5P(V_1=1|pa(V_1)))(P(V_2=1|pa(V_2))-1)}{128},\\
		&P(V_1=0,V_2=1,V_3=1|pa(V_1),pa(V_2),pa(V_3))\\&=\frac{3(1-P(V_1=1|pa(V_1)))(5P(V_2=1|pa(V_2))+3)}{64},\\
		&P(V_1=1,V_2=0,V_3=0|pa(V_1),pa(V_2),pa(V_3))\\&=\frac{3(1+5P(V_1=1|pa(V_1)))(1-P(V_2=1|pa(V_2)))}{128},\\
		&P(V_1=0,V_2=1,V_3=0|pa(V_1),pa(V_2),pa(V_3))\\&=\frac{3(1-P(V_1=1|pa(V_1)))(3+5P(V_2=1|pa(V_2)))}{64},\\
		&P(V_1=0,V_2=1,V_3=1|pa(V_1),pa(V_2),pa(V_3))\\&=\frac{3(1+5P(V_1=1|pa(V_1)))(1-P(V_2=1|pa(V_2)))}{128},\\
		&P(V_1=0,V_2=0,V_3=1|pa(V_1),pa(V_2),pa(V_3))\\&=\frac{15(1-P(V_1=1|pa(V_1)))(1-P(V_2=1|pa(V_2)))}{64},\\
		&P(V_1=0,V_2=0,V_3=1|pa(V_1),pa(V_2),pa(V_3))\\&=\frac{15(1-P(V_1=1|pa(V_1)))(1-P(V_2=1|pa(V_2)))}{64}.
		\end{align}
	\end{small}
	Notice that these equations are only related to $P(V_1=1|pa(V_1))$ and $P(V_2=1|pa(V_2))$, so until now, we have proved that the two sets of parameters produce two same distributions for observed variables and therefore $G$ is parameter unidentifiable at this point.\qed
\end{proof}

\section{Proof for the Chain Structure (Theorem~\ref{thm:id_of_semi})}
\label{app:thmidofsemi}
\thmidofsemi*

\begin{proof}
	
	In the analysis, we use the following notations.
	For observed nodes $V_1,V_2,$ $\ldots, V_t$, we use $\overline{V_1\cdots V_t}$ to represent their collective states as a bit string.
	For a bit string $\gamma$ of length $t$, we write 
	\begin{equation}
	\begin{aligned}
	&a^\gamma=P(\overline{V_1\cdots V_t}=\gamma), b^\gamma=P(\overline{V_1\cdots V_t}=\gamma,U_t=0),\\& c^\gamma=P(\overline{V_1\cdots V_t}=\gamma, U_t=1).
	\end{aligned}
	\end{equation}
	Note that $a^\gamma$ is observable, but $b^\gamma$ and $c^\gamma$ are not observable.
	
	We will use induction method to prove this result. More specifically, we want to prove the base step that $p_1,r_1,q_{1,1},q_{1,2}$ and $q_{2,1}r_2$ are known. Moreover, we will prove the induction step that if $p_1,p_2,\cdots,p_{t-2}$, $r_1,r_2,\cdots,r_{t-2}$, $q_{1,1},q_{2,1},\cdots,q_{t-2,1}$ and $q_{1,2},q_{2,2},\cdots,q_{t-2,2},r_{t-1}q_{t-1,1}$ are already known, we can compute $q_{t-1,1},r_{t-1},p_{t-1},q_{t-1,2}$ and $r_tq_{t,1}$ using distributions of observed variables.
	
	Initially, in order to complete the induction step, we divide the problem into two cases which are $t=n$ and $3\leq t\leq n-1$.
	
	Firstly, if $t=n$, we only need to compute $q_{t-1,1},r_{t-1},p_{t-1},q_{t-1,2}$. Actually, we have the following relations of $a^{\overline{\gamma 0}}$ and other known variables. If $\gamma[-1]=0$, we have
	\begin{equation}
	\begin{aligned}
	a^{\overline{\gamma 0}}&=b^\gamma+(1-q_{t-1,2})c^\gamma\\
	&=\frac{1-r_{t-1}}{1-q_{t-1,1}r_{t-1}}a^\gamma
	+\frac{r_{t-1}(1-q_{t-1,1})}{1-q_{t-1,1}r_{t-1}}(1-q_{t-1,2})a^\gamma\\
	&=a^\gamma-\frac{q_{t-1,2}r_{t-1}(1-q_{t-1,1})}{1-q_{t-1,1}r_{t-1}}a^\gamma
	\end{aligned}
	\label{equation:equation_of_gamma0}
	\end{equation}

	If $\gamma[-1]=1$ and $\gamma=\overline{\beta 1},\beta[-1]=0$, we have
	\begin{equation}
	\begin{aligned}
	a^{\overline{\gamma 0}}&=(1-p_{t-1})(a^\gamma-q_{t-1,2}c^\gamma)\\
	&=(1-p_{t-1})(a^\gamma-q_{t-1,2}r_{t-1}(a^\beta-(1-q_{t-1,1})(a^\beta-q_{t-2,2}c^\beta)))
	\end{aligned}
	\end{equation}
	
	If $\gamma[-1]=1$ and $\beta[-1]=1$, we have
	\begin{equation}
	\begin{aligned}
	a^{\overline{\gamma 0}}&=(1-p_{t-1})(a^\gamma-q_{t-1,2}c^\gamma)\\
	&=(1-p_{t-1})\\&(a^\gamma-q_{t-1,2}r_{t-1}(a^\beta-(1-p_{t-2})(1-q_{t-1,1})(a^\beta-q_{t-2,2}c^\beta)))
	\end{aligned}
	\end{equation}
	
	We should notice that $b^\beta,c^\beta$ are both determined by the parameters we already know, hence, we can see the three types of recursions as functions of $q_{t-1,2},q_{t-1,2}r_{t-1}$ and $p_{t-1}$. We only need to prove that we can solve the three parameters from these functions. In order to simplify the equations, we define some notations as below where $|\beta|=t-2$.
	\begin{equation}
	\begin{aligned}
	&\text{Co}_1(\beta)=\begin{cases}
	-a^{\beta}+(1-p_{t-2})(a^{\beta}-q_{t-2,2}c^{\beta})&\beta[-1]=1\\
	-a^{\beta}+(a^{\beta}-q_{t-2,2}c^{\beta})&\beta[-1]=0
	\end{cases}\\
	&\text{Co}_2(\beta)=\begin{cases}
	-r_{t-1}q_{t-1,1}(1-p_{t-2})(a^{\beta}-q_{t-2,2}c^{\beta})&\beta[-1]=1\\
	-r_{t-1}q_{t-1,1}(a^{\beta}-q_{t-2,2}c^{\beta})&\beta[-1]=0
	\end{cases}
	\end{aligned}
	\label{equation:def_of_Co}
	\end{equation}
	It is easy to verify that $\text{Co}_1(\beta)+\text{Co}_2(\beta)=-a^{\overline{\beta 1}}$. According to these, we have
	\begin{equation}
	\begin{aligned}
	\frac{a^{\overline{\beta_110}}}{a^{\overline{\beta_210}}}&=\frac{\text{Co}_1(\beta_1)q_{t-1,2}r_{t-1}+\text{Co}_2(\beta_1)q_{t-1,2}-\text{Co}_1(\beta_1)-\text{Co}_2(\beta_1)}{\text{Co}_1(\beta_2)q_{t-1,2}r_{t-1}+\text{Co}_2(\beta_2)q_{t-1,2}-\text{Co}_1(\beta_2)-\text{Co}_2(\beta_2)}
	\end{aligned}
	\end{equation}
	which is equivalent to
	\begin{equation}
	\begin{aligned}
	&(\frac{\text{Co}_1(\beta_1)}{a^{\overline{\beta_110}}}-\frac{\text{Co}_1(\beta_2)}{a^{\overline{\beta_210}}})q_{t-1,2}r_{t-1}
	+(\frac{\text{Co}_2(\beta_1)}{a^{\overline{\beta_110}}}-\frac{\text{Co}_2(\beta_2)}{a^{\overline{\beta_210}}})q_{t-1,2}\\
	&-(\frac{\text{Co}_1(\beta_1)+\text{Co}_2(\beta_1)}{a^{\overline{\beta_110}}}-\frac{\text{Co}_1(\beta_2)+\text{Co}_2(\beta_2)}{a^{\overline{\beta_210}}})=0
	\end{aligned}
	\label{equation:equation_of_gamma1}
	\end{equation}
	To show the relation of the coefficients of $q_{t-1,2}r_{t-1}$ and $q_{t-1,2}$, we prove two lemmas at first.

	\begin{lemma}
		For two arbitrary $0-1$ string $\beta_1,\beta_2$ with the same lengths, we prove that the equation $\frac{a^{\overline{\beta_2 0}}}{c^{\overline{\beta_2 0}}}=\frac{a^{\overline{\beta_1 1}}}{c^{\overline{\beta_1 1}}}$ is impossible. 
	\end{lemma}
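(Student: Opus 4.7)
The plan is to interpret each ratio $c^{\gamma}/a^{\gamma}$ probabilistically as $P(U_t{=}1 \mid V_1,\ldots,V_t = \gamma)$, where $t$ is the common length of $\overline{\beta_1 1}$ and $\overline{\beta_2 0}$, and then to show that this posterior has a fundamentally different functional form depending on whether $V_t = 0$ or $V_t = 1$. Concretely, when $V_t = 0$ the posterior turns out to be a \emph{constant} in $\beta$, whereas when $V_t = 1$ it depends nontrivially on $\beta$; equating the two then forces an impossible relation among the parameters.

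First I would fix an arbitrary $\beta$ of length $t-1$ and expand $a^{\overline{\beta 0}}$ and $c^{\overline{\beta 0}}$ by conditioning on $\beta$ and $U_t$. In the chain model the parents of $V_t$ are $V_{t-1}$, $U_{t-1}$, $U_t$, and $U_t$ is independent of $V_1,\ldots,V_{t-1}$ because its only other child $V_{t+1}$ has not been conditioned on. Writing $\alpha := P(U_{t-1}{=}1 \mid \beta)$ for the posterior of the \emph{other} unobserved parent of $V_t$, and averaging the IC expression $P(V_t{=}0 \mid V_{t-1}, U_{t-1}, U_t) = (1 - p_{t-1}V_{t-1})(1 - q_{t-1,2}U_{t-1})(1 - q_{t,1}U_t)$ over $U_{t-1}$, the dependence on $U_t$ factors out of the $\beta$-dependent part. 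Setting $K(\beta) := P(\beta)(1-p_{t-1}\beta[-1])(1-\alpha q_{t-1,2})$, this yields
\[
a^{\overline{\beta 0}} = K(\beta)\,(1 - r_t q_{t,1}), \qquad c^{\overline{\beta 0}} = K(\beta)\, r_t(1 - q_{t,1}),
\]
so $a^{\overline{\beta 0}}/c^{\overline{\beta 0}} = (1 - r_t q_{t,1})/\bigl(r_t(1-q_{t,1})\bigr)$ is \emph{independent} of $\beta$.

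Second, I would repeat the computation for the last bit equal to $1$. With $M(\beta) := (1-p_{t-1}\beta[-1])(1-\alpha q_{t-1,2})$, a direct expansion gives
\[
a^{\overline{\beta 1}} = P(\beta)\bigl[1 - (1 - r_t q_{t,1})\,M(\beta)\bigr], \qquad c^{\overline{\beta 1}} = P(\beta)\,r_t\bigl[1 - (1-q_{t,1})\,M(\beta)\bigr].
\]
Equating $a^{\overline{\beta_2 0}}/c^{\overline{\beta_2 0}}$ with $a^{\overline{\beta_1 1}}/c^{\overline{\beta_1 1}}$ and cross-multiplying, the terms linear in $M(\beta_1)$ cancel symmetrically on the two sides, and one is left with $1 - r_t q_{t,1} = 1 - q_{t,1}$, i.e.\ $(1-r_t)\, q_{t,1} = 0$. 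Under the hypothesis of Theorem~\ref{thm:id_of_semi} every parameter lies in $(0,1)$, so both $q_{t,1} \neq 0$ and $r_t \neq 1$, making the equality impossible for every choice of $\beta_1,\beta_2$.

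\textbf{Expected main obstacle.} The only genuine subtlety is handling the conditioning on the full history: a priori, conditioning on $V_1,\ldots,V_{t-1}$ couples the posteriors of every earlier unobserved variable $U_1,\ldots,U_{t-1}$, and the computation threatens to become opaque. The structural observation that keeps the proof short is that in the chain $V_t$'s dependence on the history is routed only through $V_{t-1}$ and $U_{t-1}$, so the entire past packages cleanly into the single scalar $\alpha$ (and hence $M(\beta)$); once this is recognized, the algebraic collapse to $(1-r_t)q_{t,1} = 0$ is immediate.
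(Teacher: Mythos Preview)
Your proposal is correct and follows essentially the same route as the paper: both arguments show that the ratio for a $0$-ending string is the $\beta$-free constant $\dfrac{1-r\,q}{r(1-q)}$ (equivalently $\dfrac{b}{c}=\dfrac{1-r}{r(1-q)}$ in the paper's formulation), compute the ratio for a $1$-ending string via the one-step recursion, and equate to reach a parameter contradiction. Your packaging of the entire past into the single scalar $M(\beta)$ makes the cancellation $A(1-BM)=B(1-AM)\Rightarrow A=B$ transparent and yields the explicit obstruction $(1-r_t)q_{t,1}=0$, whereas the paper carries $b^{\beta_1},c^{\beta_1},q_{t-2,2},p_{t-2}$ through and simply asserts the resulting equality ``is impossible if all the terms are not zero.''
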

	
	\begin{proof}
		Actually, if $\frac{a^{\overline{\beta_2 0}}}{c^{\overline{\beta_2 0}}}=\frac{a^{\overline{\beta_1 1}}}{c^{\overline{\beta_1 1}}}$, 
		we have $\frac{b^{\overline{\beta_1 1}}}{c^{\overline{\beta_1 1}}}=\frac{b^{\overline{\beta_2 0}}}{c^{\overline{\beta_2 0}}}$. 
		We can verify that 
		
		\begin{equation}
		\begin{aligned}
		\frac{b^{\overline{\beta_2 0}}}{c^{\overline{\beta_2 0}}} &= \frac{1-r_{t-1}}{r_{t-1}(1-q_{t-1,1})}\\
		&=\frac{b^{\overline{\beta_1 1}}}{c^{\overline{\beta_1 1}}}\\
		&=\begin{cases}
		\frac{1-r_{t-1}}{r_{t-1}}\frac{b^{\beta_1}+c^{\beta_1}-((1-q_{t-2,2})c^{\beta_1}+b^{\beta_1})}{b^{\beta_1}+c^{\beta_1}-(1-q_{t-1,1})(b^{\beta_1}+(1-q_{t-2,2})c^{\beta_1})}&\beta_1[-1]=0\\
		\frac{1-r_{t-1}}{r_{t-1}}\frac{b^{\beta_1}+c^{\beta_1}-(1-p_{t-2})((1-q_{t-2,2})c^{\beta_1}+b^{\beta_1})}{b^{\beta_1}+c^{\beta_1}-(1-p_{t-2})(1-q_{t-1,1})(b^{\beta_1}+(1-q_{t-2,2})c^{\beta_1})}&\beta_1[-1]=1
		\end{cases}
		\end{aligned}
		\end{equation}
		which is impossible if all the terms are not zero.\qed
	\end{proof}
	
	\begin{lemma}
		As we defined above, we have the following equation if the denominators are not zero:
		
		\begin{equation}
		\begin{aligned}
		&\frac{\frac{\text{Co}_1(\beta_1)}{a^{\overline{\beta_110}}}-\frac{\text{Co}_1(\beta_2)}{a^{\overline{\beta_210}}}}{\frac{\text{Co}_2(\beta_1)}{a^{\overline{\beta_110}}}-\frac{\text{Co}_2(\beta_2)}{a^{\overline{\beta_210}}}}=-\frac{q_{t-1,2}-1}{q_{t-1,2}r_{t-1}-1}
		\end{aligned}
		\end{equation}
		\label{equation:coef_lemma}
	\end{lemma}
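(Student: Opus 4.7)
The plan is to exploit a structural identity that is already embedded in the derivation leading to Equation~\eqref{equation:equation_of_gamma1}. Regrouping the numerator $\text{Co}_1(\beta_i)q_{t-1,2}r_{t-1} + \text{Co}_2(\beta_i)q_{t-1,2} - \text{Co}_1(\beta_i) - \text{Co}_2(\beta_i)$ displayed just above \eqref{equation:equation_of_gamma1} gives
$$a^{\overline{\beta_i 1 0}} \;=\; \text{Co}_1(\beta_i)\bigl(q_{t-1,2}r_{t-1} - 1\bigr) + \text{Co}_2(\beta_i)\bigl(q_{t-1,2} - 1\bigr).$$
Thus each $a^{\overline{\beta_i 1 0}}$ is, as a function of the unknowns, a linear combination of $\text{Co}_1(\beta_i)$ and $\text{Co}_2(\beta_i)$ with scalar coefficients $u := q_{t-1,2}r_{t-1} - 1$ and $v := q_{t-1,2} - 1$ that do \emph{not} depend on $i$.

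First I would introduce shorthands $A_i := \text{Co}_1(\beta_i)$, $B_i := \text{Co}_2(\beta_i)$, and $D_i := a^{\overline{\beta_i 1 0}}$, so the structural identity becomes $D_i = A_i u + B_i v$ for $i = 1, 2$. Then I would put the two fractions in the numerator and the two in the denominator of the lemma's left-hand side over the common denominator $D_1 D_2$ and substitute $D_i = A_i u + B_i v$:
\begin{align*}
\frac{A_1}{D_1} - \frac{A_2}{D_2} &= \frac{A_1(A_2 u + B_2 v) - A_2(A_1 u + B_1 v)}{D_1 D_2} = \frac{v\,(A_1 B_2 - A_2 B_1)}{D_1 D_2}, \\
\frac{B_1}{D_1} - \frac{B_2}{D_2} &= \frac{B_1(A_2 u + B_2 v) - B_2(A_1 u + B_1 v)}{D_1 D_2} = \frac{-u\,(A_1 B_2 - A_2 B_1)}{D_1 D_2}.
\end{align*}
Dividing and cancelling the common factor $(A_1 B_2 - A_2 B_1)/(D_1 D_2)$, which is nonzero by the stated hypothesis that the denominators do not vanish, leaves exactly $-v/u = -(q_{t-1,2}-1)/(q_{t-1,2}r_{t-1}-1)$, as claimed.

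The one step that really needs care, as opposed to being pure symbol-pushing, is the verification that the structural identity $D_i = A_i u + B_i v$ holds uniformly in the two sub-cases $\beta_i[-1] = 0$ and $\beta_i[-1] = 1$. This amounts to unfolding the case-split definitions of $\text{Co}_1$ and $\text{Co}_2$ from \eqref{equation:def_of_Co} and matching each with the corresponding branch (case~2 or case~3) of the recursion for $a^{\overline{\gamma 0}}$ specialized to $\gamma = \overline{\beta_i 1}$. Once that bookkeeping is carried out, the rest of the argument is a one-line cancellation, so the main obstacle is exactly this clean case-by-case verification of the linearity structure; after that the apparently unwieldy ratio collapses automatically.
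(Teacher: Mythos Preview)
Your argument is correct and is essentially the paper's own proof: both substitute $a^{\overline{\beta_i10}}$ by the linear form $\text{Co}_1(\beta_i)(q_{t-1,2}r_{t-1}-1)+\text{Co}_2(\beta_i)(q_{t-1,2}-1)$, put everything over the common denominator, and cancel the cross-term $\text{Co}_1(\beta_1)\text{Co}_2(\beta_2)-\text{Co}_1(\beta_2)\text{Co}_2(\beta_1)$. One tiny remark: strictly speaking the displayed ratio just above \eqref{equation:equation_of_gamma1} only gives $D_i=C\bigl(A_iu+B_iv\bigr)$ with a common proportionality constant $C$ (namely $1-p_{t-1}$, or $(1-p_{t-1})(1-r_tq_{t,1})$ when $t<n$), not $D_i=A_iu+B_iv$ exactly; but this $C$ cancels in your final ratio just as it does in the paper's computation, so nothing changes.
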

	
	\begin{proof}
		Actually, we can compute the left hand side of Equation \ref{equation:coef_lemma} as below:
		
		\begin{equation}
		\begin{aligned}
		&\frac{\frac{\text{Co}_1(\beta_1)}{a^{\overline{\beta_110}}}-\frac{\text{Co}_1(\beta_2)}{a^{\overline{\beta_210}}}}{\frac{\text{Co}_2(\beta_1)}{a^{\overline{\beta_110}}}-\frac{\text{Co}_2(\beta_2)}{a^{\overline{\beta_210}}}}\\
		&=\frac{\frac{\text{Co}_1(\beta_1)}{\text{Co}_1(\beta_1)(q_{t-1,2}r_{t-1}-1)+\text{Co}_2(\beta_1)(q_{t-1,2}-1)}- \frac{\text{Co}_1(\beta_2)}{\text{Co}_1(\beta_2)(q_{t-1,2}r_{t-1}-1)+\text{Co}_2(\beta_2)(q_{t-1,2}-1)}}
		{\frac{\text{Co}_2(\beta_1)}{\text{Co}_1(\beta_1)(q_{t-1,2}r_{t-1}-1)+\text{Co}_2(\beta_1)(q_{t-1,2}-1)}-\frac{\text{Co}_2(\beta_2)}{\text{Co}_1(\beta_2)(q_{t-1,2}r_{t-1}-1)+\text{Co}_2(\beta_2)(q_{t-1,2}-1)}}\\
		&=\frac{(\text{Co}_1(\beta_1)\text{Co}_2(\beta_2)-\text{Co}_1(\beta_2)\text{Co}_2(\beta_1))(q_{t-1,2}-1)}{(\text{Co}_1(\beta_2)\text{Co}_2(\beta_1)-\text{Co}_1(\beta_1)\text{Co}_2(\beta_2))(q_{t-1,2}r_{t-1}-1)}\\
		&=-\frac{q_{t-1,2}-1}{q_{t-1,2}r_{t-1}-1}
		\end{aligned}
		\end{equation}
		
		So the lemma is proved. We should also notice that this lemma holds not only for $t=n$, it is true for $t=2,3,\cdots,n-1$ because $\text{Co}_1$ and $\text{Co}_2$ are defined the same as Equation \ref{equation:def_of_Co}.\qed
	\end{proof}

	Now we prove that Equation \ref{equation:equation_of_gamma0} and Equation \ref{equation:equation_of_gamma1} are two linear independent equations for unknown variables $q_{t-1,2}r_{t-1}$ and $q_{t-1,2}$. We only need to prove that the ratio of coefficients of $q_{t-1,2}r_{t-1}$ and $q_{t-1,2}$ are different in these two equations. Because otherwise, we have
	
	\begin{equation}
	\begin{aligned}
	-\frac{q_{t-1,2}-1}{q_{t-1,2}r_{t-1}-1}=-\frac{1}{q_{t-1,1}r_{t-1}}
	\end{aligned}
	\end{equation}
	
	which is impossible because $q_{t-1,1}r_{t-1}(q_{t-1,2}-1)-(q_{t-1,2}r_{t-1}-1)=(1-q_{t-1,2}r_{t-1})(1-q_{t-1,1}r_{t-1})+(1-r_{t-1})q_{t-1,1}q_{t-1,2}r_{t-1}>0$. Until now, we have proved that when $t$ is equal to $n$, we can compute $q_{t-1,1},r_{t-1},p_{t-1},q_{t-1,2}$ if the other parameters are already known.
	
	Now we consider the case that $3\leq t\leq n-1$, and we need to prove that if $t<n$, we can use $p_1,p_2,\cdots,p_{t-2}$, $r_1,r_2,\cdots,r_{t-2}$, $q_{1,1},q_{2,1},\cdots,q_{t-2,1}$ and $q_{1,2},q_{2,2},$ $\cdots,q_{t-2,2},r_{t-1}q_{t-1,1}$ to compute $q_{t-1,1},r_{t-1},p_{t-1},q_{t-1,2}$ and $r_tq_{t,1}$. First, we propose a lemma to illustrate the recurrence relationship of the distributions. 
	
	\begin{lemma}
		Suppose $\gamma$ is a $0-1$ string with $t-1$ items, then we have the following relations:
		
		\begin{equation}
		\begin{aligned}
		b^{\overline{\gamma 1}}=\begin{cases}
		(1-r_t)(a^\gamma-(1-p_{t-1})(b^\gamma+(1-q_{t-1,2})c^\gamma))&\gamma[-1]=1\\
		(1-r_t)(a^\gamma-(b^\gamma+(1-q_{t-1,2})c^\gamma))&\gamma[-1]=0
		\end{cases}
		\end{aligned}
		\end{equation}
		
		\begin{equation}
		\begin{aligned}
		b^{\overline{\gamma 0}}=\begin{cases}
		(1-r_t)(1-p_{t-1})(b^\gamma+(1-q_{t-1,2})c^\gamma)&\gamma[-1]=1\\
		(1-r_t)(b^\gamma+(1-q_{t-1,2})c^\gamma)&\gamma[-1]=0
		\end{cases}
		\end{aligned}
		\end{equation}
		
		\begin{equation}
		\begin{aligned}
		c^{\overline{\gamma 1}}=\begin{cases}
		r_t(a^\gamma-(1-p_{t-1})(1-q_{t,1})(b^\gamma+(1-q_{t-1,2})c^\gamma))&\gamma[-1]=1\\
		r_t(a^\gamma-(1-q_{t,1})(b^\gamma+(1-q_{t-1,2})c^\gamma))&\gamma[-1]=0
		\end{cases}
		\end{aligned}
		\end{equation}
		
		\begin{equation}
		\begin{aligned}
		c^{\overline{\gamma 0}}=\begin{cases}
		r_t(1-p_{t-1})(1-q_{t,1})(b^\gamma+(1-q_{t-1,2})c^\gamma)&\gamma[-1]=1\\
		r_t(1-q_{t,1})(b^\gamma+(1-q_{t-1,2})c^\gamma)&\gamma[-1]=0
		\end{cases}
		\end{aligned}
		\end{equation}
	\end{lemma}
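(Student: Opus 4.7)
The plan is to prove each of the four identities by direct expansion, exploiting two structural facts of the chain model: (i) $U_t$ is a fresh Bernoulli latent with no parents other than its own prior $r_t$, so it is independent of $V_1,\ldots,V_{t-1}$ and of $U_{t-1}$; and (ii) $V_t$'s parents are exactly $V_{t-1}, U_{t-1}, U_t$ with edge weights $p_{t-1}, q_{t-1,2}, q_{t,1}$, so by the IC semantics $P(V_t=0 \mid v_{t-1},u_{t-1},u_t) = (1-p_{t-1})^{v_{t-1}}(1-q_{t-1,2})^{u_{t-1}}(1-q_{t,1})^{u_t}$.

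First I would derive $b^{\overline{\gamma 0}}$. Conditioning on $U_t=0$, which by independence contributes the prefactor $1-r_t$, and marginalizing over $u_{t-1}\in\{0,1\}$ gives
\begin{align*}
b^{\overline{\gamma 0}} &= (1-r_t)\sum_{u_{t-1}} P(\overline{V_1\cdots V_{t-1}}=\gamma, U_{t-1}=u_{t-1})\,(1-p_{t-1})^{\gamma[-1]}(1-q_{t-1,2})^{u_{t-1}} \\
&= (1-r_t)(1-p_{t-1})^{\gamma[-1]}\bigl[b^\gamma + (1-q_{t-1,2})\,c^\gamma\bigr],
\end{align*}
which matches the two stated cases once one reads off $(1-p_{t-1})^{\gamma[-1]}$ as $1$ or $1-p_{t-1}$. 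Repeating the calculation with $U_t=1$ replaces the leading $1-r_t$ by $r_t$ and inserts the factor $1-q_{t,1}$ (the probability that $U_t$'s activation attempt on $V_t$ fails), yielding the expression for $c^{\overline{\gamma 0}}$.

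The $V_t=1$ identities then follow by complementation. Since $U_t$ is independent of $V_1,\ldots,V_{t-1}$, we have $b^{\overline{\gamma 0}} + b^{\overline{\gamma 1}} = (1-r_t)\,a^\gamma$ and $c^{\overline{\gamma 0}} + c^{\overline{\gamma 1}} = r_t\,a^\gamma$. Subtracting the previously computed $V_t=0$ expressions yields precisely the claimed formulas for $b^{\overline{\gamma 1}}$ and $c^{\overline{\gamma 1}}$.

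There is no genuine obstacle, only a bookkeeping point that must be tracked carefully: the $b^\gamma, c^\gamma$ on the right-hand side are indexed with respect to the latent $U_{t-1}$ (since $|\gamma|=t-1$), whereas $b^{\overline{\gamma 0}}, c^{\overline{\gamma 0}}$ on the left-hand side are indexed with respect to $U_t$. One therefore has to apply the IC decomposition at level $t$ and the independence argument for $U_t$ separately from the marginalization already implicit at level $t-1$, after which the case split on $\gamma[-1]$ reduces simply to whether $V_{t-1}$ is active and hence attempts to activate $V_t$ through edge $p_{t-1}$.
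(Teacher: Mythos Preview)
Your argument is correct and is exactly the direct verification one would expect: factor out $P(U_t=u_t)$ by independence, marginalize over $U_{t-1}\in\{0,1\}$ using the IC failure probability $(1-p_{t-1})^{\gamma[-1]}(1-q_{t-1,2})^{u_{t-1}}(1-q_{t,1})^{u_t}$ for $V_t=0$, and recover the $V_t=1$ cases by the complement identity $b^{\overline{\gamma 0}}+b^{\overline{\gamma 1}}=(1-r_t)a^\gamma$. The paper in fact states this lemma without proof, treating it as an immediate recurrence, so your write-up simply supplies the routine check the authors left implicit.
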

	
	Moreover, we have the following relations of $a^{\overline{\gamma 0}}$ and other known variables. If $\gamma[-1]=0$, we have
	\begin{equation}
	\begin{aligned}
	a^{\overline{\gamma 0}}
	&=(1-r_tq_{t,1})(a^\gamma-\frac{q_{t-1,2}r_{t-1}(1-q_{t-1,1})}{1-q_{t-1,1}r_{t-1}}a^\gamma)
	\end{aligned}
	\label{equation:equation1_for_t_less_n}
	\end{equation}

	If $\gamma[-1]=1$ and $\gamma=\overline{\beta 1},\beta[-1]=0$, we have
	\begin{equation}
	\begin{aligned}
	a^{\overline{\gamma 0}}&=(1-r_tq_{t,1}) (1-p_{t-1})(a^\gamma-q_{t-1,2}c^\gamma)\\
	&=(1-r_tq_{t,1})(1-p_{t-1})\\&(a^\gamma-q_{t-1,2}r_{t-1}(a^\beta-(1-q_{t-1,1})(a^\beta-q_{t-2,2}c^\beta)))
	\end{aligned}
	\label{equation:equation2_for_t_less_n}
	\end{equation}
	
	If $\gamma[-1]=1$ and $\beta[-1]=1$, we have
	\begin{equation}
	\begin{aligned}
	a^{\overline{\gamma 0}}&=(1-r_tq_{t,1})(1-p_{t-1})(a^\gamma-q_{t-1,2}c^\gamma)\\
	&=(1-r_tq_{t,1})(1-p_{t-1})\\&
	(a^\gamma-q_{t-1,2}r_{t-1}(a^\beta-(1-p_{t-2})(1-q_{t-1,1})(a^\beta-q_{t-2,2}c^\beta)))
	\end{aligned}
	\end{equation}
	
	Each of these equations is only one more factor compared to the $t=n$ case, so we can still get the similar result
	\begin{equation}
	\begin{aligned}
	&(\frac{\text{Co}_1(\beta_1)}{a^{\overline{\beta_110}}}-\frac{\text{Co}_1(\beta_2)}{a^{\overline{\beta_210}}})q_{t-1,2}r_{t-1}
	+(\frac{\text{Co}_2(\beta_1)}{a^{\overline{\beta_110}}}-\frac{\text{Co}_2(\beta_2)}{a^{\overline{\beta_210}}})q_{t-1,2}\\
	&-(\frac{\text{Co}_1(\beta_1)+\text{Co}_2(\beta_1)}{a^{\overline{\beta_110}}}-\frac{\text{Co}_1(\beta_2)+\text{Co}_2(\beta_2)}{a^{\overline{\beta_210}}})=0
	\end{aligned}
	\end{equation}
	which is an equation of $q_{t-1,2}r_{t-1}$ and $q_{t-1,2}$. Since we assume that $r_{t-1}$ or $q_{t-1,2}$ is known, we can get both $r_{t-1}$ and $q_{t-1,2}$ by this equation. Then we can get $r_{t}q_{t,1}$ by Equation \ref{equation:equation1_for_t_less_n} using $r_tq_{t,1}=1-\frac{a^{\overline{\gamma 0}}}{a^\gamma-\frac{q_{t-1,2}r_{t-1}(1-q_{t-1,1})}{1-q_{t-1,1}r_{t-1}}a^\gamma}$. Then all the terms in Equation \ref{equation:equation2_for_t_less_n} except $p_{t-1}$ are known and not $0$, so we can get $p_{t-1}$. This completes the inductive transition.
	
	Finally, we consider the base case. We only need to prove that if $p_1$ is known, one of $r_1$ and $q_{1,2}$ is known, then $q_{1,1},r_2q_{2,1}$ and the unknown one between $r_1,q_{1,2}$ can be computed using $a^{\overline{00}},a^{\overline{01}}$ and $a^{\overline{10}}$. Actually, we have the following equations
	
	\begin{align}
	&a^{\overline{00}}=(1-r_1)(1-r_2q_{2,1})+r_1(1-q_{1,1})(1-q_{1,2})(1-r_2q_{2,1}),\\
	&a^{\overline{10}}=r_1q_{1,1}(1-q_{1,2})(1-r_2q_{1,2})(1-p_1),\\
	&a^{\overline{01}}=(1-r_1)r_2q_{2,1}+r_1(1-q_{1,1})(1-(1-q_{1,2})(1-r_2q_{2,1})).
	\end{align}
	
	When we know $p_1$ and $r_1$, we can get the other parameters as below
	\begin{small}
		\begin{align}
		&q_{1,2}=\frac{1-a^{\overline{00}}-a^{\overline{01}}}{r_1},\\
		&r_2q_{2,1}=((a^{\overline{00}})^2(p_1-1)+(1-r_1)(p_1+a^{\overline{10}}-1)-a^{\overline{01}}(a^{\overline{01}}-1+p_1+r_1-p_1r_1)\\&-a^{\overline{00}}(-2+a^{\overline{01}}+a^{\overline{10}}+2p_1-a^{\overline{01}}p_1+r_1-p_1r_1))\\&/((a^{\overline{00}}+a^{\overline{01}}-1)(p_1-1)(r_1-1)r_2),\\
		&q_{1,1}=1+\frac{a^{\overline{10}}r_1}{(-1+p_1)q_{1,1}(-1+r_1)r_1(q_{2,1}r_2-1)}.
		\end{align}
	\end{small}
	
	When we know $p_1$ and $q_{1,2}$, we can get the other parameters as below
	\begin{small}
		\begin{align}
		&r_1=\frac{1}{a^{\overline{10}}q_{1,2}}(-a^{\overline{00}}+(a^{\overline{00}})^2+a^{\overline{00}}a^{\overline{01}}+a^{\overline{00}}a^{\overline{10}}+a^{\overline{01}}a^{\overline{10}}+a^{\overline{00}}p_1-(a^{\overline{00}})^2p_1\\&-a^{\overline{00}}a^{\overline{01}}p_1+a^{\overline{00}}q_{1,2}-(a^{\overline{00}})^2q_{1,2}-a^{\overline{00}}a^{\overline{01}}q_{1,2}+a^{\overline{10}}q_{1,2}-a^{\overline{00}}a^{\overline{10}}q_{1,2}\\&-a^{\overline{01}}a^{\overline{10}}q_{1,2}+(a^{\overline{00}})^2p_1q_{1,2}+a^{\overline{00}}a^{\overline{01}}p_1q_{1,2}),\\
		&q_{1,1}=\frac{1-a^{\overline{00}}-a^{\overline{01}}}{r_1},\\
		&r_2q_{2,1}=\frac{-1 + a^{\overline{10}} + p_1 + (a^{\overline{00}}+a^{\overline{01}}) (-1 + p_1) (-1 + q_{1,2}) + q_{1,2} - p_1 q_{1,2}}{(-1+a^{\overline{00}}+a^{\overline{01}})(-1+p_1)(-1+q_{1,2})r_2}.
		\end{align}
	\end{small}
	Notice that all the denominators in these solutions are not $0$, therefore, we have proved the base case.
	Until now, we have proved the whole theorem.\qed
\end{proof}

\section{Proof for the Model with a Global Hidden Variable  (Theorem~\ref{thm:id_of_glob})}
\label{app:thmidofglob}
\thmidofglob*

\begin{proof}
We have the following equations
\begin{small}
\begin{equation}
    \begin{aligned}
    &P(V_1=0,V_2=0,\cdots,V_t=0)=(1-q_1)(1-q_2)\cdots (1-q_t)r+(1-r),\\
    &P(V_1=0,V_2=0,\cdots,V_t=1,V_{t+1}=0,\cdots,V_{t_0-1}=0)\\&=(1-q_1)(1-q_2)\cdots q_t(1-q_{t+1})\cdots(1-q_{t_0-1})r\prod_{(V_t,V_i)\in E,t+1\leq i<t_0}(1-p_{t,i}),\\
    &P(V_1=0,V_2=0,\cdots,V_t=1,V_{t+1}=0,\cdots,V_{t_0}=0)\\&=(1-q_1)(1-q_2)\cdots q_t(1-q_{t+1})\cdots (1-q_{t_0}))r\prod_{(V_t,V_i)\in E,t+1\leq i<t_0}(1-p_{t,i})
    \end{aligned}
\end{equation}\end{small}
such that $V_{t_0}$ is not a child of $V_{t}$. Therefore, for each pair of unconnected nodes, if two of them $V_i,V_j$ ($i<j$) satisfy $q_i,q_j\neq 0$, we can get $1-q_j=\frac{P(V_1=0,V_2=0,\cdots,V_i=1,V_{i+1}=0,\cdots,V_j=0)}{P(V_1=0,V_2=0,\cdots,V_i=1,V_{i+1}=0,\cdots,V_{j-1}=0)}$. Then we can use the following two equations to solve $r$:
\begin{equation}
    \begin{aligned}
    &(1-q_j)(r(1-q_1)(1-q_2)\cdots (1-q_{j-1}))+(1-r)\\&=P(V_1=0,V_2=0,\cdots,V_j=0),\\&(r(1-q_1)(1-q_2)\cdots (1-q_{j-1}))+(1-r)\\&=P(V_1=0,V_2=0,\cdots,V_{j-1}=0).
    \end{aligned}
\end{equation}
If we see $1-r$ and $r(1-q_1)(1-q_2)\cdots (1-q_{j-1})$ as two unknown variables, they can be solved by this system of linear equations. With the value of $r$, we can solve out $q_1,q_2,\cdots,q_n$ by using $P(V_1=0,V_1=0,\cdots,V_t=1)=(1-q_1)(1-q_2)\cdots q_tr$ for $t=1,2,\cdots,n$. For $p_{i,j}$, it can be computed by $1-p_{i,j}=\frac{P(V_1=0,V_2=0,\cdots,V_i=1,V_{i+1}=0,\cdots,V_j=0)}{P(V_1=0,V_2=0,\cdots,V_i=1,V_{i+1}=0,\cdots,V_{j}=0)(1-q_{j})}$ if $q_i\neq 0$. We suppose the parents of $V_j$ are $V_{i_1},V_{i_2},\cdots,V_{i_t}$ such that $i_1<i_2<\cdots <i_t<j$. Therefore, we have
\begin{equation}
    \begin{aligned}
    &\frac{P(V_{i_1}=0,\cdots,V_{i_{k-1}}=0,V_{i_k}=1,V_{i_{k+1}},\cdots,V_{i_t}=0,V_j=0)}{P(V_{i_1}=0,\cdots,V_{i_{k-1}}=0,V_{i_k}=1,V_{i_{k+1}},\cdots,V_{i_t}=0}\\&=(1-q_j)(1-p_{i_k,j}),k=1,2,\cdots,t
    \end{aligned}
\end{equation}
if $P(V_{i_k}=1)>0$. Hence, we can still get all the meaningful $p_{i_k,j}$. Until now, we have got all the parameters solved.

Now we consider the case that $\not\exists 1\leq i<j\leq n$ such that $(V_i,V_j)\not\in E$ and $q_i,q_j\neq 0$. This is equivalent to the claim that for all the $V_i$ such that $q_i\neq 0$, they form a complete graph if we remove all the directions of edges. Suppose three of them are $V_i,V_j,V_k,1\leq i<j<k\leq n$. Then we can get the following equations:
\begin{scriptsize}
\begin{equation}
    \begin{aligned}
    &\frac{P(V_1=0,\cdots,V_{i-1}=0,V_i=1,V_{i+1}=0,\cdots,V_j=0)}{P(V_1=0,\cdots,V_{i-1}=0,V_i=1,V_{i+1}=0,\cdots,V_{j-1}=0)}=(1-q_j)(1-p_{i,j}),\\
    &\frac{P(V_1=0,\cdots,V_{i-1}=0,V_i=1,V_{i+1}=0,\cdots,V_k=0)}{P(V_1=0,\cdots,V_{i-1}=0,V_i=1,V_{i+1}=0,\cdots,V_{k-1}=0)}=(1-q_k)(1-p_{i,k}),\\
    &\frac{P(V_1=0,\cdots,V_{j-1}=0,V_j=1,V_{j+1}=0,\cdots,V_k=0)}{P(V_1=0,\cdots,V_{j-1}=0,V_j=1,V_{j+1}=0,\cdots,V_{k-1}=0)}=(1-q_k)(1-p_{j,k}),\\
    &\frac{P(V_1=0,\cdots,V_{i-1}=0,V_i=1,V_{i+1}=0,\cdots,V_{j-1}=0,V_j=1,V_{j+1}=0,\cdots,V_k=0)}{P(V_1=0,\cdots,V_{i-1}=0,V_i=1,V_{i+1}=0,\cdots,V_{k-1}=0)}\\&=\frac{(1-q_k)(1-p_{i,k})(1-p_{j,k})(1-(1-q_j)(1-p_{i,j}))}{(1-q_j)(1-p_{i,j})}.
    \end{aligned}
\end{equation}
\end{scriptsize}
Therefore, $q_i,q_j,q_k,p_{i,j},p_{i,k},p_{j,k}$ can all be solved. Then we can use the same procedure to get all the $q_1,q_2,\cdots,q_n$ and then all the parameters.

In conclusion, we have solved the identifiability problem of the model with a global hidden variable.\qed
\end{proof}

\section{Proof for the Mixed Model (Theorem~\ref{thm:mixedmodel})}

\thmmixedmodel*

\begin{proof}
	In order to simplify our proof, we introduce some new notations. Suppose we already have $V_i,V_j,V_k$ as required in description of this theorem.
	\begin{equation}
	\begin{aligned}
	&a_l=r\prod_{t=1}^l(1-q_t)(1-q_{0,t})=P(V_1=0,V_2=0,\cdots,V_l=0,U_0=1),\\
	&b_l=(1-r)\prod_{t=1}^l(1-q_t)=P(V_1=0,V_2=0,\cdots,V_l=0,U_0=0),
	\end{aligned}
	\end{equation}
	\begin{equation}
	\begin{aligned}
	&x_{i,l}=\frac{1-(1-q_i)(1-q_{0,i})}{(1-q_i)(1-q_{0,i})}\prod_{(V_i,V_t)\in E,t\leq l}(1-p_{i,t})\\&=\frac{P(V_1=0,V_2=0,\cdots,V_l=0,U_0=1)}{P(V_1=0,\cdots,V_{i-1}=0,V_i=1,V_{i+1}=0,\cdots,V_l=0,U_0=1)},\\
	&x_{j,l}=\frac{1-(1-q_j)(1-q_{0,j})}{(1-q_j)(1-q_{0,j})}\prod_{(V_j,V_t)\in E,t\leq l}(1-p_{j,t})\\&=\frac{P(V_1=0,V_2=0,\cdots,V_l=0,U_0=1)}{P(V_1=0,\cdots,V_{j-1}=0,V_j=1,V_{j+1}=0,\cdots,V_l=0,U_0=1)},\\
	&x_{k,l}=\frac{1-(1-q_k)(1-q_{0,k})}{(1-q_k)(1-q_{0,k})}\prod_{(V_k,V_t)\in E,t\leq l}(1-p_{k,t})\\&=\frac{P(V_1=0,V_2=0,\cdots,V_l=0,U_0=1)}{P(V_1=0,\cdots,V_{k-1}=0,V_k=1,V_{k+1}=0,\cdots,V_l=0,U_0=1)},
	\end{aligned}
	\end{equation}
	\begin{equation}
	\begin{aligned}
	&y_{i,l}=\frac{q_i}{1-q_i}\prod_{(V_i,V_t)\in E,t\leq l}(1-p_{i,t})\\&=\frac{P(V_1=0,V_2=0,\cdots,V_l=0,U_0=1)}{P(V_1=0,\cdots,V_{i-1}=0,V_i=1,V_{i+1}=0,\cdots,V_l=0,U_0=0)},\\
	&y_{j,l}=\frac{q_j}{1-q_j}\prod_{(V_j,V_t)\in E,t\leq l}(1-p_{j,t})\\&=\frac{P(V_1=0,V_2=0,\cdots,V_l=0,U_0=1)}{P(V_1=0,\cdots,V_{j-1}=0,V_j=1,V_{j+1}=0,\cdots,V_l=0,U_0=0)},\\
	&y_{k,l}=\frac{q_k}{1-q_k}\prod_{(V_k,V_t)\in E,t\leq l}(1-p_{k,t})\\&=\frac{P(V_1=0,V_2=0,\cdots,V_l=0,U_0=1)}{P(V_1=0,\cdots,V_{k-1}=0,V_k=1,V_{k+1}=0,\cdots,V_l=0,U_0=0)}.
	\end{aligned}
	\end{equation}
	Therefore, we can easily verify that $a_l+b_l=P(V_1=0,V_2=0,\cdots,V_l=0)\equiv p_1$. Moreover, because $V_i,V_j,V_k$ are disconnected with each other, we have the following equations.
	\begin{small}
		\begin{equation}
		\begin{aligned}
		&a_lx_{i,l}+b_ly_{i,l}=P(V_1=0,\cdots,V_{i-1}=0,V_i=1,V_{i+1}=0,\cdots,V_l=0)\equiv p_2,\\
		&a_lx_{j,l}+b_ly_{j,l}=P(V_1=0,\cdots,V_{j-1}=0,V_j=1,V_{j+1}=0,\cdots,V_l=0)\equiv p_3,\\
		&a_lx_{k,l}+b_ly_{k,l}=P(V_1=0,\cdots,V_{k-1}=0,V_k=1,V_{k+1}=0,\cdots,V_l=0)\equiv p_4,\\
		&a_lx_{i,l}x_{j,l}+b_ly_{i,l}y_{j,l}=P(\overline{V_1V_2\cdots V_l}=\overline{0^{i-1}10^{j-i-1}10^{l-j}})\equiv p_5,\\
		&a_lx_{i,l}x_{k,l}+b_ly_{i,l}y_{k,l}=P(\overline{V_1V_2\cdots V_l}=\overline{0^{i-1}10^{k-i-1}10^{l-k}})\equiv p_6,\\
		&a_lx_{k,l}x_{j,l}+b_ly_{k,l}y_{j,l}=P(\overline{V_1V_2\cdots V_l}=\overline{0^{j-1}10^{k-j-1}10^{l-k}})\equiv p_7,\\
		&a_lx_{i,l}x_{j,l}x_{k,l}+b_ly_{i,l}y_{j,l}y_{k,l}=P(\overline{V_1V_2\cdots V_l}=\overline{0^{i-1}10^{j-i-1}10^{k-j-1}10^{l-k}})\equiv p_8.
		\end{aligned}
		\label{equation:mixed_equations}
		\end{equation}
	\end{small}
	Here, $p_t,t\in[8]$ are known because $V_1,V_2,\cdots,V_l$ are observable. Therefore, we can solve out $a_l,b_l$ using the $8$ equations. The result is shown as below.
	\begin{equation}
	\begin{aligned}
	&a_l=\frac{\text{numerator}_1\pm\text{numerator}_2}{\text{denominator}},\\
	&\text{denominator}=2(-p_1^2 p_8^2+2 p_1 p_2 p_7 p_8+2
	p_1 p_3 p_6 p_8+2 p_1 p_4 p_5 p_8\\&-4
	p_1 p_5 p_6 p_7-p_2^2 p_7^2-4 p_2 p_3
	p_4 p_8+2 p_2 p_3 p_6 p_7+2 p_2 p_4
	p_5 p_7-p_3^2 p_6^2\\&+2 p_3 p_4 p_5
	p_6-p_4^2 p_5^2),\\
	&\text{numerator}_1=2 {p_1}^2 {p_2} {p_7} {p_8}+2
	{p_1}^2 {p_3} {p_6} {p_8}+2 {p_1}^2 {p_4} {p_5} {p_8}-4
	{p_1}^2 {p_5} {p_6} {p_7}-\\&{p_1} {p_2}^2 {p_7}^2-4 {p_1}
	{p_2} {p_3} {p_4} {p_8}+2 {p_1} {p_2} {p_3} {p_6}
	{p_7}+2 {p_1} {p_2} {p_4} {p_5} {p_7}-{p_1} {p_3}^2
	{p_6}^2\\&+2 {p_1} {p_3} {p_4} {p_5} {p_6}-{p_1} {p_4}^2
	p_5^2-p_1^3p_8^2,\\
	&\text{numerator}_2=\left({p_1}^2 {p_8}-{p_1} {p_2}
	{p_7}-{p_1} {p_3} {p_6}-{p_1} {p_4} {p_5}+2 {p_2}
	{p_3} {p_4}\right)\\& ({p_1}^2 {p_8}^2-2 {p_1} {p_2} {p_7}
	{p_8}-2 {p_1} {p_3} {p_6} {p_8}-2 {p_1} {p_4} {p_5}
	{p_8}+4 {p_1} {p_5} {p_6} {p_7}+{p_2}^2 {p_7}^2\\&+4 {p_2}
	{p_3} {p_4} {p_8}-2 {p_2} {p_3} {p_6} {p_7}-2 {p_2}
	{p_4} {p_5} {p_7}+{p_3}^2 {p_6}^2-2 {p_3} {p_4} {p_5}
	{p_6}+{p_4}^2 {p_5}^2)^\frac{1}{2}.
	\end{aligned}
	\end{equation}
	We can verify that
	\begin{align}
	&(p_2 p_3 - p_1 p_5) (p_2 p_4 - p_1 p_6) (p_3 p_4 - p_1 p_7)\\&=-a_l^3b_l^3(x_{i,l}-y_{i,l})^2(x_{j,l}-y_{j,l})^2(x_{k,l}-y_{k,l})^2<0
	\end{align}
	Therefore, there is only one solution of $a_l$ which is positive because 
	\begin{align}
	&\text{numerator}_1^2-\text{numerator}_2^2\\
	&=-4(p_2 p_3 - p_1 p_5) (p_2 p_4 - p_1 p_6) (p_3 p_4 - p_1 p_7)\\
	&(p_4^2 p_5^2 - 2 p_3 p_4 p_5 p_6 + p_3^2 p_6^2 - 2 p_2 p_4 p_5 p_7 - 
	2 p_2 p_3 p_6 p_7 + 4 p_1 p_5 p_6 p_7 + p_2^2 p_7^2 \\&+ 4 p_2 p_3 p_4 p_8 - 
	2 p_1 p_4 p_5 p_8 - 2 p_1 p_3 p_6 p_8 - 2 p_1 p_2 p_7 p_8 + p_1^2 p_8^2)
	\\&=2(p_2 p_3 - p_1 p_5) (p_2 p_4 - p_1 p_6) (p_3 p_4 - p_1 p_7)\text{denominator},\\
	&\text{numerator}_1=\frac{1}{2}p_1\text{denominator}.
	\end{align}
	Because $a_l$ is fixed now, we can get $b_l$ according to $a_l+b_l=p_1$. This process can be done if and only if the denominator is not zero, which is equivalent to $a^2 b^2 (x_i - y_i)^2 (x_j - y_j)^2 (x_k - y_k)^2\neq 0$. Therefore, we only need to find $V_i,V_j,V_k$ satisfying the conditions in the theorem and $q_{0,i},q_{0,j},q_{0,k}\neq 0$.
	
	Notice that $l$ is an arbitrary number not less than $k$, we can get $q_t$ and $q_{0,t},k+1\leq n$ can be computed using $\frac{a_{t}}{a_{t-1}}=(1-q_t)(1-q_{0,t})$ and $\frac{b_t}{b_{t-1}}=1-q_t$. For parameter $p_{t,l},l>t>k$, we can compute it using 
	\begin{align}
	&\frac{P(V_1=0,\cdots,V_{t-1}=0,V_t=1,V_{t+1}=0,\cdots,V_l=0)}{P(V_1=0,\cdots,V_l=0)}\\&=\frac{a_l\frac{1-(1-q_t)(1-q_{0,t})}{(1-q_t)(1-q_{0,t})}+b_l\frac{q_t}{1-q_t}}{a_l+b_l}\prod_{(V_t,V_i)\in E,i\leq l}(1-p_{t,i}),\\
	&\frac{P(V_1=0,\cdots,V_{t-1}=0,V_t=1,V_{t+1}=0,\cdots,V_{l-1}=0)}{P(V_1=0,\cdots,V_{l-1}=0)}\\&=\frac{a_{l-1}\frac{1-(1-q_t)(1-q_{0,t})}{(1-q_t)(1-q_{0,t})}+b_{l-1}\frac{q_t}{1-q_t}}{a_{l-1}+b_{l-1}}\prod_{(V_t,V_i)\in E,i\leq l-1}(1-p_{t,i}).
	\end{align}
	Then we can get $\prod_{(V_t,V_i)\in E,i\leq l-1}(1-p_{t,i})$ and $\prod_{(V_t,V_i)\in E,i\leq l}(1-p_{t,i})$ and their division is what we want. Until now, we have proved the first part of the theorem. Now suppose we have $i=j-1=k-2$. Then for an arbitrary $0-1$ string $\gamma$ with $i-1$ bits, we still have the following facts.
	\begin{tiny}
		\begin{equation}
		\begin{aligned}
		&\frac{P(\overline{V_1\cdots V_{i-1}}=\gamma,V_i=1,V_{j}=1,V_{k}=1,U_0=t)}{P(\overline{V_1\cdots V_{i-1}}=\gamma,V_i=0,V_{j}=0,V_{k}=0,U_0=t)}=\frac{P(\overline{V_1\cdots V_{i-1}}=\gamma,V_i=1,V_{j}=0,V_{k}=0,U_0=t)}{P(\overline{V_1\cdots V_{i-1}}=\gamma,V_i=0,V_{j}=0,V_{k}=0,U_0=t)}\\&\frac{P(\overline{V_1\cdots V_{i-1}}=\gamma,V_i=0,V_{j}=1,V_{k}=0,U_0=t)}{P(\overline{V_1\cdots V_{i-1}}=\gamma,V_i=0,V_{j}=0,V_{k}=0,U_0=t)}\frac{P(\overline{V_1\cdots V_{i-1}}=\gamma,V_i=0,V_{j}=0,V_{k}=1,U_0=t)}{P(\overline{V_1\cdots V_{i-1}}=\gamma,V_i=0,V_{j}=0,V_{k}=0,U_0=t)},\\
		&\frac{P(\overline{V_1\cdots V_{i-1}}=\gamma,V_i=1,V_{j}=1,V_{k}=0,U_0=t)}{P(\overline{V_1\cdots V_{i-1}}=\gamma,V_i=0,V_{j}=0,V_{k}=0,U_0=t)}=\frac{P(\overline{V_1\cdots V_{i-1}}=\gamma,V_i=1,V_{j}=0,V_{k}=0,U_0=t)}{P(\overline{V_1\cdots V_{i-1}}=\gamma,V_i=0,V_{j}=0,V_{k}=0,U_0=t)}\\&\frac{P(\overline{V_1\cdots V_{i-1}}=\gamma,V_i=0,V_{j}=1,V_{k}=0,U_0=t)}{P(\overline{V_1\cdots V_{i-1}}=\gamma,V_i=0,V_{j}=0,V_{k}=0,U_0=t)},\\
		&\frac{P(\overline{V_1\cdots V_{i-1}}=\gamma,V_i=1,V_{j}=0,V_{k}=1,U_0=t)}{P(\overline{V_1\cdots V_{i-1}}=\gamma,V_i=0,V_{j}=0,V_{k}=0,U_0=t)}=\frac{P(\overline{V_1\cdots V_{i-1}}=\gamma,V_i=1,V_{j}=0,V_{k}=0,U_0=t)}{P(\overline{V_1\cdots V_{i-1}}=\gamma,V_i=0,V_{j}=0,V_{k}=0,U_0=t)}\\&\frac{P(\overline{V_1\cdots V_{i-1}}=\gamma,V_i=0,V_{j}=0,V_{k}=1,U_0=t)}{P(\overline{V_1\cdots V_{i-1}}=\gamma,V_i=0,V_{j}=0,V_{k}=0,U_0=t)},\\
		&\frac{P(\overline{V_1\cdots V_{i-1}}=\gamma,V_i=0,V_{j}=1,V_{k}=1,U_0=t)}{P(\overline{V_1\cdots V_{i-1}}=\gamma,V_i=0,V_{j}=0,V_{k}=0,U_0=t)}=\frac{P(\overline{V_1\cdots V_{i-1}}=\gamma,V_i=0,V_{j}=0,V_{k}=1,U_0=t)}{P(\overline{V_1\cdots V_{i-1}}=\gamma,V_i=0,V_{j}=0,V_{k}=0,U_0=t)}\\&\frac{P(\overline{V_1\cdots V_{i-1}}=\gamma,V_i=0,V_{j}=1,V_{k}=0,U_0=t)}{P(\overline{V_1\cdots V_{i-1}}=\gamma,V_i=0,V_{j}=0,V_{k}=0,U_0=t)}
		\end{aligned}
		\end{equation}
	\end{tiny}
	where $t=0$ or $t=1$. Therefore, we can still have that $8$ equations the same as Equation \ref{equation:mixed_equations} and from those equations, we can solve out all of the probabilities in the facts above. Using those probabilities, we can directly get all the parameters with indexes not larger than $k$. Together with the conclusion of the first part of this theorem, all the parameters are determined.\qed
\end{proof}

\section{Discussion on the Cyclic Models}
\label{app:cyclic}

In the last section in appendix, we will introduce how to transform a general IC model to a causal model so that we can use do calculus method to identify do effects. In fact, since the propagation of the IC model takes place for at most $n$ rounds \cite{chen2013information}, we formulate that the state of $V_i$ in round $t$ is $V_{i,t}$ and that $V_{i,t}$ has three values, $0,1$ and $2$, for three states. We construct a causal graph $G'=(V',E')$ from these nodes with subscripts of time. Here, state $0$ means that the node is not activated, $1$ means that the node was activated at the last time point, and state $2$ means that the node is activated and has already tried to activate its child nodes. Moreover, all edges are defined as $(V_{i,t},V_{i,t+1}),1\leq i,t\leq n$ and $(V_{i,t},V_{j,t+1}),1\leq i,j,t\leq n,(V_i,V_j)\in E'$. Furthermore, we define the following propagating equations.
\begin{align}
    &P(V_{i,t}=2|V_{i,t-1}=2,V_{j,t-1}=v_{j,t-1},\forall j, (V_j,V_i)\in E)=1,\\
    &P(V_{i,t}=2|V_{i,t-1}=1,V_{j,t-1}=v_{j,t-1},\forall j, (V_j,V_i)\in E)=1,\\
    &P(V_{i,t}=2|V_{i,t-1}=0,V_{j,t-1}=v_{j,t-1},\forall j, (V_j,V_i)\in E)=0,\\
    &P(V_{i,t}=1|V_{i,t-1}=0,V_{j,t-1}=v_{j,t-1},\forall j, (V_j,V_i)\in E)\\&=1-\prod_{1\leq j\leq n,(V_j,V_i)\in E}(1-p_{j,i}v_{j,t-1}I[v_{j,t-1}\neq 2]),\\
    &P(V_{i,t}=0|V_{i,t-1}=0,V_{j,t-1}=v_{j,t-1},\forall j, (V_j,V_i)\in E)\\&=\prod_{1\leq j\leq n,(V_j,V_i)\in E}(1-p_{j,i}v_{j,t-1}I[v_{j,t-1}\neq 2]).
\end{align}

By definition, we obtain that $G'$ as a Bayesian causal model, so we can use the do calculus method \cite{Pearl09} to solve out the identifiable do effects. For example, suppose $G$ has $V_1,V_2,V_3$ as observed nodes and $U_1$ as an unobserved node. The edges in $E$ are $(U_1,V_1),(U_1,V_2),(V_1,V_2),(V_2,V_3)$ and $(V_3,V_1)$. There exists a cycle in $G$ so it cannot be seen as a DAG. However, utilizing our transformation, the result graph $G'$ is in Figure \ref{fig:general_model}, which is a Bayesian causal model.

\begin{figure}[tb]
    \centering
    \includegraphics[width=0.35\linewidth]{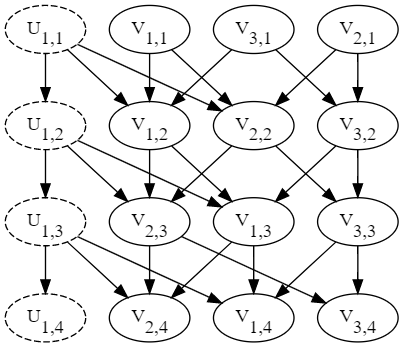}
    \caption{An example of transformation from IC model to Bayesian causal graph.}
    \label{fig:general_model}
\end{figure}
}

\end{document}